\def\BibTeX{{\rm B\kern-.05em{\sc i\kern-.025em b}\kern-.08em
    T\kern-.1667em\lower.7ex\hbox{E}\kern-.125emX}}
\newif\ifdraft
\newif\ifhighlightrev
\newcommand{\rev}[1]{\ifhighlightrev{\color{black}#1}\fi}
\newcommand{\smallStep}[2]{\langle #1, #2 \rangle}
\newcommand{\sn}[5]{\smallStep{#1}{#2} \xrightarrow{t(#2)} \langle \mathsf{#3}, #4 \rangle}
\newcommand{\snr}[3]{\smallStep{#1}{#2}\rightarrow_r   #3 }
\def\G{\Gamma}
\def\pu{public}
\def\pud{public~prog}
\def\pr{private}
\def\pum{\mathsf{\pu}}
\newcommand{\tp}[2]{\G \vdash #1 : \mathsf{#2}}
\theoremstyle{definition}
\newtheorem{definition}{Definition}[section]
\newtheorem{theorem}{Theorem}[section]
\newtheorem{lemma}[theorem]{Lemma}
\newcommand{\stress}[1]{\textbf{\textit{#1}}}
\newcommand{\fullcirc}[1]{% #1: color of circle, #2: color of semicircle, #3: angle of semicircle, #4: 360 - #3
    \begin{tikzpicture}
    \filldraw [black,fill=#1] (0,0) circle (0.8ex); % Fill circle with base colour (arg#2)
    \end{tikzpicture}
}
\newcommand{\dfull}[0]{\fullcirc{darkgray}}
\newcommand{\dnone}[0]{\fullcirc{white}}
\begin{document}
\title{Security Verification of Low-Trust Architectures} %option 1
% Formal Reasoning of Security With Low-Trust Architectures [option 2]
% 
% Other options [jb]:
% Security Verification for Low-Trust Architectures
% Reasoning about Security of Low-Trust Architectures
% High-Trust Security with/for/on Low-Trust Architectures

% \author{Qinhan Tan \\ Princeton University \\ Princeton, NJ, USA \\ qinhant@princeton.edu}

% \author{Yonathan Fisseha \\ University of Michigan \\ Ann Arbor, MI, USA \\ yonathan@umich.edu}
% \author{Qinhan Tan$^{\dagger *}$, Yonathan Fisseha$^{\ddagger *}$, Shibo Chen$^{\ddagger *}$ \\Lauren Biernacki$^\ddagger$, Jean-Baptiste Jeannin$^\ddagger$, Sharad Malik$^\dagger$, Todd Austin$^\ddagger$ \\ $^\dagger$Princeton University, $^\ddagger$University of Michigan }\thanks{$^*$ These authors contributed equally.}
\author{Qinhan Tan}
\authornote{Three co-first authors contributed equally to this research.}
\email{qinhant@princeton.edu}
\orcid{0000-0003-2475-3675}
% \authornotemark[1]
\affiliation{%
  \institution{Princeton University}
  \city{Princeton}
  \state{New Jersey}
  \country{USA}
}

% \author{Yonathan Fisseha}
% \authornotemark[1]
% \email{yonathan@umich.edu}
% \orcid{0009-0000-9645-2885}
% \author{Shibo Chen}
% \authornotemark[1]
% \email{chshibo@umich.edu}
% \orcid{0000-0002-9522-8934}
% \affiliation{%
%   \institution{University of Michigan}
%   \city{Ann Arbor}
%   \state{Michigan}
%   \country{USA}
%   \postcode{48109}
% }

\author{Yonathan Fisseha}
\authornotemark[1]
\email{yonathan@umich.edu}
\orcid{0009-0000-9645-2885}
\affiliation{%
  \institution{University of Michigan}
  \city{Ann Arbor}
  \state{Michigan}
  \country{USA}
  \postcode{48109}
}

\author{Shibo Chen}
\authornotemark[1]
\email{chshibo@umich.edu}
\orcid{0000-0002-9522-8934}
\affiliation{%
  \institution{University of Michigan}
  \city{Ann Arbor}
  \state{Michigan}
  \country{USA}
  \postcode{48109}
}

\author{Lauren Biernacki}
\email{biernacl@lafayette.edu}
\orcid{0000-0001-8511-2287} 
% \authornotemark[1]
\affiliation{%
  \institution{Lafayette College}
  \city{Easton}
  \state{Pennsylvania}
  \country{USA}
}

\author{Jean-Baptiste Jeannin}
\email{jeannin@umich.edu}
\orcid{0000-0001-6378-1447}
% \authornotemark[1]
\affiliation{%
  \institution{University of Michigan}
  \city{Ann Arbor}
  \state{Michigan}
  \country{USA}
}

\author{Sharad Malik}
\email{sharad@princeton.edu}
\orcid{0000-0002-0837-5443}
% \authornotemark[1]
\affiliation{%
  % \institution{University of Michigan}
  \institution{Princeton University}
  \city{Princeton}
  \state{New Jersey}
  \country{USA}
}

\author{Todd Austin}
\email{austin@umich.edu}
\orcid{0000-0002-0181-0852}
% \authornotemark[1]
\affiliation{%
  \institution{University of Michigan}
  \city{Ann Arbor}
  \state{Michigan}
  \country{USA}
}

\begin{abstract} %Primary Sharad

Low-trust architectures work on, from the viewpoint of software, always-encrypted data, and significantly reduce the amount of hardware trust to a small software-free enclave component.
In this paper, we perform a complete formal verification of a specific low-trust architecture, the Sequestered Encryption (SE) architecture, to show that the design is secure against direct data disclosures and digital side channels for all possible programs. We first define the security requirements of the ISA of SE low-trust  architecture. Looking upwards, this ISA serves as an abstraction of the hardware for the software, and is used to show how any program comprising these instructions cannot leak information, including through digital side channels. Looking downwards this ISA is a specification for the hardware, and is used to define the proof obligations for any RTL implementation arising from the ISA-level security requirements. These cover both functional and digital side-channel leakage. Next, we show how these proof obligations can be successfully discharged using commercial formal verification tools. We demonstrate the efficacy of our RTL security verification technique for seven different correct and buggy implementations of the SE architecture. 
%Formally verifying the absence of side channels for a realistic general-purpose processing system has been a complex and error-prone process, as such verification requires formal proofs on both software and hardware designs. The emergence of sequestered encryption (SE) provides new hope to formally verify the security of a general-purpose processing system. In SE, no software is able to observe the secret, and, as a result, the root of trust is only in the hardware architecture and its microarchitecture design. In this paper, we provide a formal proof of the security properties offered in SE, and propose a verification workflow that can be extended to similar encryption-enabled hardware designs.
\end{abstract}

\begin{CCSXML}
<ccs2012>
   <concept>
       <concept_id>10002978.10002986</concept_id>
       <concept_desc>Security and privacy~Formal methods and theory of security</concept_desc>
       <concept_significance>500</concept_significance>
       </concept>
   <concept>
       <concept_id>10002978.10003006.10011608</concept_id>
       <concept_desc>Security and privacy~Information flow control</concept_desc>
       <concept_significance>500</concept_significance>
       </concept>
   <concept>
       <concept_id>10002978.10002986.10002988</concept_id>
       <concept_desc>Security and privacy~Security requirements</concept_desc>
       <concept_significance>500</concept_significance>
       </concept>
   <concept>
       <concept_id>10002978.10003001.10010777.10011702</concept_id>
       <concept_desc>Security and privacy~Side-channel analysis and countermeasures</concept_desc>
       <concept_significance>300</concept_significance>
       </concept>
 </ccs2012>
\end{CCSXML}

\ccsdesc[500]{Security and privacy~Formal methods and theory of security}
\ccsdesc[500]{Security and privacy~Information flow control}
\ccsdesc[500]{Security and privacy~Security requirements}
\ccsdesc[300]{Security and privacy~Side-channel analysis and countermeasures}

\keywords{Low-Trust Architecture, Information Flow, Formal Verification}
  
% \pagestyle{plain}
% \pagestyle{fancy}
% \fancyhead{}
% \fancyhead[C]{\textbf{Accepted By CCS 2023}}

% \begin{IEEEkeywords}
% component, formatting, style, styling, insert
% \end{IEEEkeywords}

\maketitle

{
%%% LB: Shrinking spacing between section headers
\let\spacysection\section
\renewcommand{\section}[1]{\vspace{-0.35em}\spacysection{#1}\vspace{-0.2em}}
\let\spacysubsection\subsection
\renewcommand{\subsection}[1]{\vspace{-0.35em}\spacysubsection{#1}\vspace{-0.05em}}
\let\spacysubsubsection\subsubsection
\renewcommand{\subsubsection}[1]{\vspace{-0.2em}\spacysubsubsection{#1}\vspace{-0.1em}}
%%% LB: Shrinking spacing between section headers

\section{Introduction}
Security verification of a computing system, while highly desirable, is a challenging task that often falls short of the desired level of guarantees. The verification must be applied to all trusted components of the system, including hardware and software. Unlike penetration testing, which consists of focused attempts to infiltrate a system, formal security verification is a proof that a particular security vulnerability does not exist within a design. Unfortunately, most systems today receive little to no formal security verification, due to design complexity challenges and limitations of formal proof mechanisms. Design complexity manifests in the sheer size of today's secure systems, which comprise architectures, microarchitectures, and deep software stacks, all of which must be trusted and verified. These complex systems easily exceed the capabilities of today's formal proof mechanisms, such as SAT/SMT solvers, model checkers, and proof assistants. Consequently, incomplete penetration testing still remains the backbone of today's security verification efforts.

\stress{Low-trust architectures} have recently emerged as a secure system design framework that {\em i)} eliminates all trust in software, and {\em ii)} significantly reduces the amount of hardware trust to a small, software-free enclave component. 
These properties make formal security verification feasible by shrinking the system aspects that must be trusted and verified. 
In this paper, we focus on the security verification of Sequestered Encryption (SE)~\cite{se-paper}—a low-trust architecture that claims to protect the confidentiality of sensitive data against direct data disclosures and digital side channels. \rev{Direct disclosures refer to any direct leakage of plaintext values through SE computation. Digital side channels represent any indirect leakage of plaintext values through non-analog information paths, including analysis of ciphertext values, operational timing, program control flow, memory access patterns, or microarchitectural resource usage. Currently, SE Enclave does not protect against analog information flow paths, such as frequency throttling~\cite{hertzbleed, frequency-throttling}, power analysis~\cite{kocher1999differential, goubin1999and,mangard2003simple}, electromagnetic snooping~\cite{oneanddone,dropanddrop,nscwithEM}, etc.}

Within SE, the instruction set architecture (ISA) consists of the native instructions, termed the native ISA, and a set of secure instructions termed the SE ISA. 
% SE extends the native instruction set architecture (ISA) with a set of secure instructions, termed the SE ISA extension. 
% SE is an instruction set architecture (ISA) that supports native ISA and an SE ISA extension.
The native ISA contains insecure instructions to be executed by unsecured (or untrusted) components that do not have access to secret values. 
In contrast, the SE ISA consists of secure instructions that operate on encrypted data and are executed solely in its software-free enclave component.
It is the design of this SE Enclave and ISA extension that ensures the cryptographic-strength confidentiality of the sensitive data.
% These secure instructions operate on encrypted data without software needing a data access key, and thus, any software exploits can only render encrypted data. 
% SE claims that this design possess cryptographic-strength defenses, for any program, to direct data disclosures and digital side channels. 

% Specifically in this paper, we focus on the security verification of the Sequestered Encryption (SE) low-trust architecture~\cite{se-paper}. SE is an instruction set architecture (ISA) that supports native ISA and an SE ISA extension. Native ISA consists of insecure instructions to be executed by unsecured components and they do not have access to secret values. SE ISA extension consists of secure instructions that are executed solely in its secured enclave component. Secure instructions operate directly on encrypted data without software needing a data access key, and thus, any software exploits can only render encrypted data. The approach claims to possess cryptographic-strength defenses, for any program, to direct data disclosures and digital side channels. Digital side channels include all non-analog information leakage paths, including any leakage through the analysis of ciphertext, operational timing, program control flow, memory access patterns, ciphertext values, or microarchitectural resource usage.

% \vspace{-1mm}
\begin{figure}[t]
    \centering
    % \vspace{3mm}
    \includegraphics[width=\columnwidth,trim={6cm 3cm 0.9cm 1.3cm},clip]{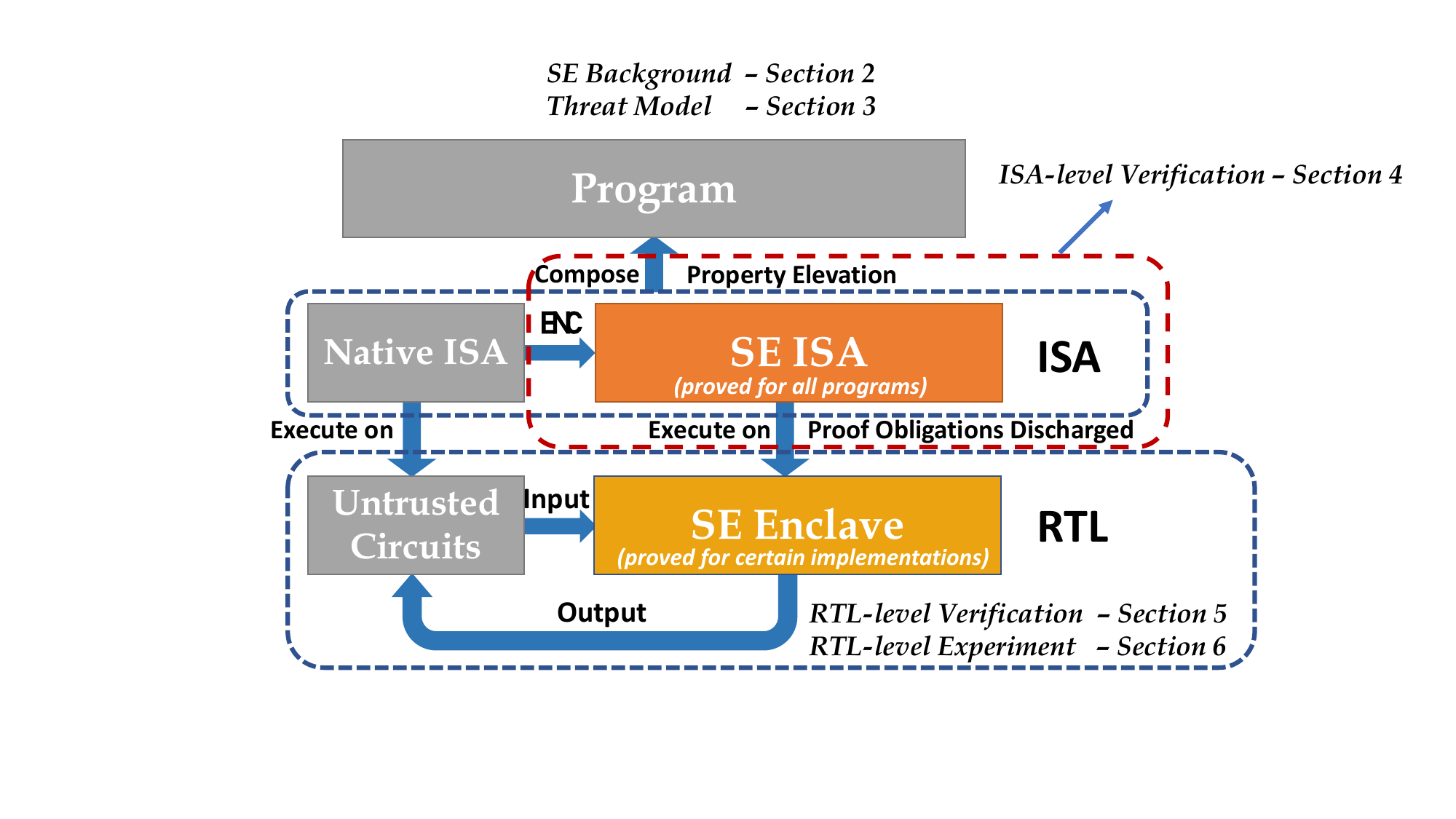}
    \vspace{-2em}
    \caption{Proof System and Paper Organization}
    \vspace{-2.2em}
    \label{fig:proof-system}
\end{figure}

In this work, we perform a complete formal verification of the SE low-trust architecture to show that the design is secure against direct data disclosures and digital side channels for all possible programs. The steps involved are illustrated in Figure~\ref{fig:proof-system}. First, we articulate a set of instruction-level security requirements that SE ISA must fulfill to prevent disclosures or digital side channels. Going upwards to the software, these are used in hand-driven proof techniques to show that no SE program can possibly create a disclosure or digital side-channel leakage. Going downwards toward the hardware, the ISA security requirements necessitate security properties to be enforced in the hardware design, typically called RTL-level security properties.
%in support of instruction-level properties.
% The instruction-level proof also needs to rely on a set of security assumptions of the RTL level implementation.
Last, we propose a verification scheme to formally verify RTL-level properties using an off-the-shelf commercial formal verification tool (Cadence's JasperGold~\cite{website:jaspergold}) on real designs. 
Our evaluation experiments show that our RTL-level verification scheme can prove the security properties being met by correct implementations, and also capture security leakages in flawed implementations.
%This shows that our proposed verification scheme is sound and complete, and the security of SE Enclave design can be formally verified.
This demonstrates the practical applicability of our proposed verification scheme to designs at scale.
%We successfully complete our proposed verification effort, leading to proof-based evidence that the SE Enclave, as implemented in our experiments, possesses cryptographic-strength defenses against disclosure or digital side channels regardless of the program expressed using SE instructions. \qt{Should we emphasize that SE Enclave is secure, or should we emphasize that our verification scheme is complete and reliable?}
To our knowledge, this is the first formal verification of a secure computing framework that extends to both direct disclosures and digital side channels, for all possible programs running on a verified computing platform.

A key takeaway from this work is that low-trust architectures lend themselves to formal security verification. We find two primary reasons for this outcome. First, the nature of low-trust architectures eliminates any trust in software. Since our verification ensures that the software can only see values encrypted under semantically secure cryptosystems~\cite{semnatically-secure}, software verification is not part of the overall proof system. Using only hand-driven proof, we show that any program using our ISA cannot disclose or create digital side channels, thus ending concern for any software. 
In traditional security verification, where properties must be proven partly in hardware and software, often the complexity of software reasoning leads to compromises on what can be guaranteed in these systems. 
Second, the simplicity of the low-trust SE hardware enclave, having minimal state and control logic, allows all of our ISA-level-based security assertions to complete on the actual RTL of the design, ensuring no gaps between the deployed design and any potential abstractions employed to enable formal verification.
We are confident that the approach we have detailed in this paper will extend itself to future low-trust architectures as they become available.

%%%%%%%%%%%%%%%%%%%%%%%%%%%%
%% Goals of this work
{\vspace{1em}
\noindent\textit{Contributions.}}
\rev{This work observes and demonstrates how low-trust architectures enable end-to-end software-to-hardware verification of strong security attributes, \textit{i.e.}, confidentiality and digital side-channel free execution.  Below is the list of specific contributions:
\vspace{-1em}
\begin{itemize}[leftmargin=*]
    \item \textbf{Formal SE ISA Semantics}: Define formal SE ISA semantics that enable privacy-related reasoning.
    \item \textbf{Software-Level Proof}: Povide proof of confidentiality and side-channel freedom for \textbf{\textit{all programs}} written using this ISA.
    \item \textbf{Hardware Proof Obligations}: Provide proof obligations for any hardware implementation of the SE ISA to serve as the interface between the verification of the software and the hardware.
    \item \textbf{Hardware-Level Proof}: 
    \begin{itemize}[leftmargin=*]
        \item Provide a list of security properties that meet the hardware proof obligations for a specific SE hardware implementation.
        \item Demonstrate that these properties can be checked using standard information flow tracking (IFT) and commercial off-the-shelf IFT tools with novel RTL-verification elements.
        \item Demonstrate how the checking detects bugs in four buggy different implementations that violate these properties.
    \end{itemize}
\end{itemize}
\vspace{-1em}

To further clarify the contributions of this work, we do not claim generalizability beyond the small enclaves in low-trust architectures detailed in \S\ref{sec:background}. In fact, the increased level of verification is enabled by the low-trust architecture's property of limiting trust to only within the small SE Enclave, which in turn eliminates all trust in software and significantly reduces the degree of hardware that must be trusted (and thus needs to be verified) to ensure the proof properties.}

Figure~\ref{fig:proof-system} also serves to illustrate our paper organization. In \S~\ref{sec:background}, we provide a brief overview of SE. In \S~\ref{sec:threat-model}, we articulate the threat model and the scope of this work. In \S~\ref{sec:isa}, we formalize ISA-level properties and prove the program-level properties inducted from the ISA-level properties followed by proof responsibilities discharged to the RTL-level. In \S~\ref{sec:definition}, we present our modeling and verification strategy at the RTL-level. In \S~\ref{sec:evaluate}, we describe the SE design variances and apply our verification scheme to these designs. We close the paper with related works (\S~\ref{sec:related}) and final conclusions (\S~\ref{sec:conclusions}).

Our designs and verification scripts are available at \url{https://github.com/qinhant/SE_verification_CCS}.
% Primary Todd, Secondary Sharad, Jean-Baptiste 

\section{Background}\label{sec:background}
In this section, we provide an overview of Sequestered Encryption (SE)~\cite{se-paper} and show that the unique characteristics and design principles of SE open up new opportunities for hardware verification to provide complete reasoning of the underlying computing paradigm without the knowledge of specific programs. 
\rev{A more detailed study of the SE design, including comparisons with related work, can be found in \cite{se-paper}.}
% We refer the reader to \cite{se-paper} for a more detailed study of the SE design, including comparisons with related work.  ?

% \lmb{I have written this section with the following language, `SE ensures that there are not side channels.' and `SE guarantees that secure instructions are implemented to have data-independent timing.' Alternatively, we can re-word this section to push our verification angle, as these are really `claims' of SE. For example: `SE claims that there are not side channels' and `SE assumes that secure instructions are implemented to have data-independent timing.'}

%%%%%%%%%%%%%%%%%%%%%%%%%%%%%%%%%%%%
\subsection{Sequestered Encryption (SE)}

Sequestered Encryption (SE) is a hardware-based technique to protect the confidentiality of secret third-party data during computation. 
With SE, third-party data is encrypted in a trusted, client-side environment and offloaded to a server for computation. The server application operates on third-party data using SE's ISA instructions backed by custom hardware support. 
Specifically, SE extends the conventional ISA to include secure instructions that operate on ciphertext data. We call these additional secure instructions the SE ISA. These instructions are dispatched to SE's hardware enclave, which computes the requested operation on the source ciphertexts. 
% This design enables the server application to operate on secret data without ever being able to read the plaintext value of that data.

% SE's design guarantees that user private data is never disclosed to software or leaked via digital side channels. 
\rev{The SE Enclave works to significantly reduce software and hardware trust by sequestering all sensitive computation to the hardware enclave. As such, the SE Enclave design claims that sensitive private data cannot be disclosed by any SE instruction sequence, either directly through a disclosure or indirectly through a digital side channel. This claim covers all digital side channels, including cryptanalysis of the ciphertext emitted by the Enclave, Enclave operational timing, and any possible influence the SE Enclave has on the system's memory access patterns and control flow. The goal of this paper is to formalize and prove these claims for the SE Enclave, through proofs on the SE instruction set operational semantics and their proof obligations expressed in the RTL implementation in the SE Enclave. \textbf{\textit{With these claims proven, the SE Enclave represents the first enclave that has been formally verified to not suffer from software vulnerabilities or digital side channels.}}}
%%SE's design \rev{claims} that user private data is never disclosed to software or leaked via digital side channels. 
%%Specifically, any program is not able to directly read any data dependent on the secret as plaintext or partially encrypted ciphertext (which can be recovered through crypto-analysis), or indirectly infer the secret data by measuring the execution time. \sm{How do we connect this to inferring secrets from the patterns at the output of the SE-enclave?} \lb{How do we want to word this regarding SE's ``claims'' and ``guarentees'' @Qinhan do you have ideas? QT: How about the above words? Also, should we only talk about timing side channels? I think we do not consider other side channels.}
%%\sm{How do we connect this to inferring secrets from the patterns at the output of the SE-enclave?}

\begin{figure}[ht]
    \centering
    \includegraphics[width=0.44\textwidth,trim={0cm 5cm 2cm 0cm},clip]{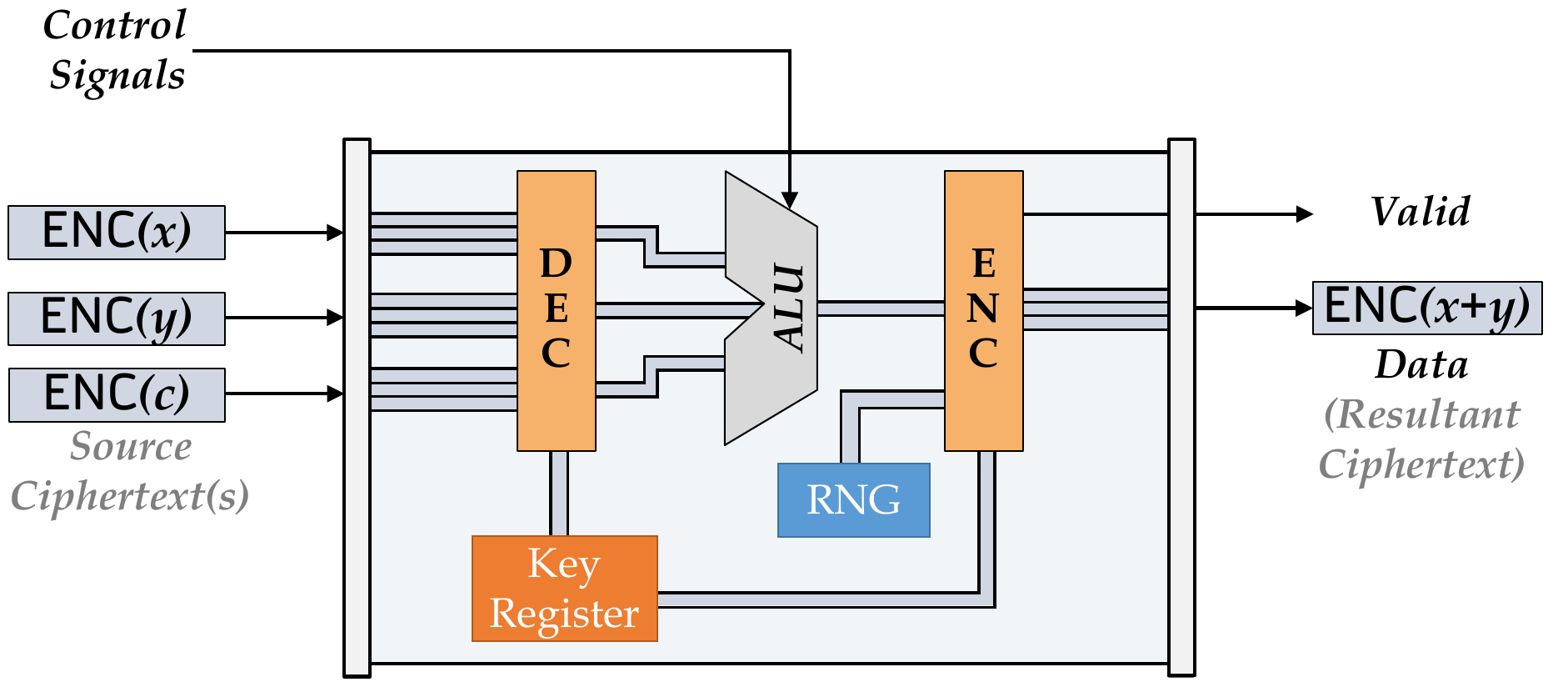}
    \caption{SE Enclave Design}
    \vspace{-2em}
    \label{fig:se_baseline}
\end{figure}

%%%%%%%%%%%%%%%%%%%%%%%%%%%%%%%%%%%%
\subsection{SE Instruction Set Architecture (ISA)}
SE extends the native ISA with \stress{secure instructions} that explicitly operate on ciphertext data. 
When the processor decodes a secure instruction from the SE ISA, the instruction is dispatched to the SE Enclave for processing. 
% The instruction opcodes indicate the format of the source operands (\textit{e.g.}, ciphertext-public) and the operation for the SE Enclave to perform (\textit{e.g.}, addition). 
This operation mirrors how native instructions are dispatched to functional units in a conventional processor.
Insecure instructions in the native ISA are dispatched to unsecured functional units. 
% SE makes no guarantee about these instructions, as they are executed in an untrusted environment. 
SE makes no \rev{claim} about these instructions, as they are executed in an untrusted environment. 

To ensure that sensitive data does not leave the enclave, SE restricts its secure instructions to be `data-oblivious,' only supporting arithmetic, logical, comparison, and shift operators. 
The classes of instructions supported by the SE ISA are summarized in Table~\ref{table:isa-overview}.
Specifically, the SE ISA does not support control flow or memory instructions, as these would innately leak information about sensitive data through architectural states like the program counter. Insecure versions of these instructions (\textit{i.e.}, those assumed to be operating on public data) are still present in the native ISA. 
SE enables secure control flow via an encrypted conditional move (\verb|CMOV|) instruction. 
Secure \verb|CMOV| instructions function as ternary operators, where a destination register is updated based on the value of some condition, akin to predicated instructions. This primitive enables programmers to make decisions on secret conditions, mimicking the logic of if-statements, in a safe manner. 
Finally, to perform plaintext-ciphertext operations, the processor must first encrypt the plaintext value using SE's \verb|ENC| instruction before passing the resultant value to another secure instruction for computation.

% All instructions supported by the SE ISA are implemented to have data-independent execution timing. 
% Namely, the SE Enclave cannot contain hardware optimizations that accelerate instructions for specific inputs (\textit{e.g.}, forwarding for multiply-by-zero~\cite{grossschadl2009side}), as these optimizations leak information about
% sensitive data via timing.

%%%%%%%%%%%%%%%%%%%%%%%%%%%%
\begin{table}[!h]
\centering
\resizebox{0.95\columnwidth}{!} {
\begin{tabular}[\aboverulesep=0ex, \belowrulesep=0ex]{@{\extracolsep{0pt}}|lllcc|}
% \bottomrule   
% &&\\[-1em]
\hline
\rowcolor[HTML]{f3f3f3}\multicolumn{1}{|l}{} &  \multicolumn{2}{l}{} & \multicolumn{1}{c}{Secure}& \multicolumn{1}{c|}{Insecure}\\
\rowcolor[HTML]{f3f3f3}\multicolumn{1}{|l}{Instruction Class} &  \multicolumn{2}{l}{Example} & \multicolumn{1}{c}{(SE Enclave)}& \multicolumn{1}{c|}{(Native)}\\
&&\\[-1em]
% \toprule
\hline
&&\\[-1em]
Encryption & \texttt{ENC} & Encryption  & \dfull & \dnone\\
Arithmetic & \texttt{ADD} & Addition & \dfull & \dfull\\
Logical & \texttt{AND} & Logical And & \dfull & \dfull\\
Comparison & \texttt{LT} & Less Than & \dfull & \dfull\\
Shift & \texttt{SLL} & Logical Left Shift & \dfull & \dfull\\
Conditional & \texttt{CMOV} & Conditional Move & \dfull & \dfull\\
Memory & \texttt{LD} & Load & \dnone & \dfull\\
Control Flow & \texttt{JMP} & Jump & \dnone & \dfull\\
% \bottomrule
\hline
\end{tabular}
}
\caption{Summary of ISA}
    \vspace{-2em}
\label{table:isa-overview}
\end{table}
%%%%%%%%%%%%%%%%%%%%%%%%%%%%

%%%%%%%%%%%%%%%%%%%%%%%%%%%%%%%%%%%%
\subsection{SE Enclave Implementation}
The SE Enclave is architected as a small hardware functional unit embedded within the execute stage of the pipeline. This hardware unit includes storage of the secret key under which the user data is encrypted. When instructions are dispatched to the SE Enclave, the unit decrypts ciphertext source operands under this key, computes the requested operation in plaintext, then re-encrypts the result. 
Namely, the syntax for a secure \verb|ADD| instruction is  {\small$\mathsf{ENC(~DEC(}$}\verb|r1|{\small$\mathsf{)}$} \verb|ADD| {\small$\mathsf{DEC(}$}\verb|r2|{\small$\mathsf{)~)}$}
This operation is illustrated in Figure~\ref{fig:se_baseline}.
The resultant ciphertext value is the only value that leaves the enclave. 
The SE Enclave implementation is trusted and assumed to be free of direct data leakage.  
Further, the SE Enclave is implemented to have data-independent timing, such that timing the execution of the enclave cannot reveal the plaintext values of ciphertexts. 
% All instructions supported by the SE ISA are implemented to have data-independent execution timing. 
For example, the SE Enclave cannot contain hardware optimizations that accelerate instructions for specific inputs (\textit{e.g.}, forwarding for multiply-by-zero~\cite{grossschadl2009side}), as these optimizations leak information via timing.

While the design in Figure~\ref{fig:se_baseline} implements the syntax of the SE ISA, the SE Enclave can have many instantiations, including different implementations of encryption and decryption. 
Below, we discuss some of the different variations of SE Enclave presented in \cite{se-paper}.
In this work, we present different RTL instantiations in Section~\ref{sec:evaluate}.

%%%%%%%%%%%%%%%%%%
\subsubsection{Encryption Scheme}

\rev{
In SE, a value ($m$) is encrypted by appending a fresh random salt ($u$), then applying a strong pseudorandom permutation (PRP), such as a block cipher. 
In this encryption scheme, both the data value and salt are 64 bits long, thereby producing a 128-bit ciphertext. 
This scheme can be implemented in hardware with a variety of symmetric or asymmetric encryption ciphers built as rolled or unrolled implementations.}
Prior work analyses three \rev{cryptographic ciphers} for use within SE: AES-128, QARMA, and Simon. In this work, we also consider the popular asymmetric public-private key cryptosystem RSA.

The SE encryption scheme salts each ciphertext value with \rev{fresh randomness through the True Random Number Generator (TRNG)} to ensure that ciphertexts are diversified to thwart cryptanalysis attacks \rev{and some side-channel attacks, \textit{i.e.}, CIPHERLEAKS~\cite{li2021cipherleaks}. This requirement is specified in the ISA and can be ensured by a structural RTL check}. 

% SE can be implemented using a variety of symmetric or asymmetric encryption schemes built as rolled or unrolled implementations. Prior work analyses three encryption \rev{schemes} for use within SE: AES-128, QARMA, and Simon. In this work, we also consider the popular asymmetric public-private key cryptosystem RSA.
% The SE encryption scheme salts each ciphertext value with 64 bits of \rev{fresh randomness through the Random Number Generator (RNG)} to ensure that ciphertexts are diversified to thwart cryptanalysis attacks \rev{and some side-channel attacks, \textit{i.e.}, CIPHERLEAKS~\cite{li2021cipherleaks}. This requirement is specified in the ISA and can be ensured by a structural RTL check}. 
% For example, without salts, a secure \texttt{CMOV} trivially reveals that the secret condition is true if the input ciphertext matches the resultant ciphertext.
% We seek to also evaluate whether this salting operation is implemented correctly and sufficiently. \sm{Are we doing this in a significant way to claim it here?}

%%%%%%%%%%%%%%%%%%
\subsubsection{Optimizations}
SE can also be implemented with internal optimizations that seek to improve its performance. For example, \cite{se-paper} proposes caching recently decrypted ciphertexts in order to bypass the decryptor for instructions with data dependencies. 
% Optimizations have the potential to invalidate the security guarantees of SE. 
Optimizations have the potential to invalidate the security \rev{claims} of SE. 
In this work, we assess several secure and insecure optimizations of the SE Enclave to demonstrate that the secure optimizations are fully verified and the insecure optimizations are detectable by our novel verification technique.

%%%%%%%%%%%%%%%%%%%%%%%%%%%%%%%%%%%%
\subsection{Opportunities for Hardware Verification}

% SE's security guarantees are based on the assumption that the hardware is secure. 
SE's security \rev{claims} are based on the assumption that the hardware is secure. 
The SE architecture positions itself well for formal verification because it has a minimal hardware footprint and possesses no trusted software, which is in sharp contrast to other \rev{Trusted Execution Environments} (TEEs), like Intel SGX. As noted by prior work~\cite{costan2016intel}, formally reasoning about existing enclaves including Intel SGX is infeasible as any proof would have to model all processor features that exposed registers. Further, such work would be short-lived as any architectural modifications would invalidate this security proof.
Rather, SE's compact design allows it to be verified independently of other structures of the processor, thus avoiding the deficiency of verifying the whole processor.
In this work, we formally verify SE's security claims to establish trust in the SE ISA and its implementation. 

 % Primary Lauren, Secondary Shibo

\section{Threat Model}\label{sec:threat-model}

In this section, we describe and discuss our threat model. We first present our security goals. This is followed by listing the attacker's capabilities which can compromise these goals. Finally, we describe the root of trust which specified the components that can be assumed to be trusted.
%We describe the goals we would like to achieve with this work, After that, we will enumerate through attacker capabilities and list our root of trust. We assume that the user offloads sensitive data to a remote SE Enclave. Before sending data off, the user encrypts sensitive data properly with a strong encryption algorithm. 
\subsection{Security Goals}
In this work, we will formally verify that SE ISA and SE Enclave RTL implementations preserve data confidentiality. Specifically:
\begin{itemize}[leftmargin=*]
    \item Any program should not leak sensitive data through architectural states and/or side channels through instructions from the SE ISA.
    \item Any SE secure instruction or sequence of instructions should not leak sensitive data through microarchitectural states and IO signals of the SE Enclave.
\end{itemize}

\subsection{Attacker Capabilities}
In this work, the attacker's goal is to, within a reasonable amount of time, infer the plaintext secret values by analyzing the program's computation results, snooping on SE Enclave's IO signals, or analyzing the program execution time. We consider attackers to possess the following capabilities:
\begin{itemize}[leftmargin=*]
    \item The attacker can observe and/or arbitrarily change digital signals outside of the SE Enclave including signals on SE Enclave IO. \rev{These signals can be observed at every cycle even though each cycle may not result in an architectural state update.}
    \item The attacker can run any program using secure/insecure instructions, and measure the program execution time.
    \item The attacker \textbf{cannot} measure and/or arbitrarily change the states inside the SE Enclave and the physical characteristics of the chip.
\end{itemize}

\subsection{Root of Trust}\label{subsec:root-of-trust}
This work assumes the following components and algorithms, which have been studied extensively in prior works, can be trusted and meet design requirements:
\begin{itemize}[leftmargin=*]
    \item \textit{True Random Number Generator (TRNG).} We assume there exist TRNG designs that are capable of supplying at least \rev{$s$ (for example, $s=64$ in our designs}) bits of random number per cycle. 
    Prior works have proposed TRNGs with different designs~\cite{chaoticswitch,broadbandnoise,prng,jitter-rng, 7008853, 1317067, noiseic,chaoticjitter}. 
    % Prior works have proposed TRNGs with different designs~\cite{chaoticswitch,broadbandnoise,chaoscapacitor,piecewise-linear,chaostruebits,mixedsignalchaos,prng,jitter-rng,contactless,edge-racing,ches-2003-676,fpga-oscillator, 7008853, 1317067, cmosfloating, 6716186, 5619875, 982700, 7132901, 7046186, 10.1145/2462096.2462122, noiseic,chaoticjitter,rnglaster}. 
    % including chaotic circuits\cite{chaoticswitch,broadbandnoise,chaoscapacitor,piecewise-linear,chaostruebits,mixedsignalchaos}, high-jitter oscillators\cite{prng,jitter-rng,contactless,edge-racing,ches-2003-676,fpga-oscillator}, circuits to measure other stochastic physical processes\cite{7008853, 1317067, cmosfloating, 6716186, 5619875, 982700, 7132901, 7046186, 10.1145/2462096.2462122}, or a combination of various technologies\cite{noiseic,chaoticjitter,rnglaster}. 
    Recent laser-based random number generators can generate up to 250 terabits per second \cite{kim2021massively}. \rev{This approach is also shown to be bias-free in nature\cite{Wei:09}.}
    \item \textit{Key Exchange Mechanism.} We assume there exists a safe key exchange mechanism for the user and SE Enclave to establish shared keys. 
    Key exchange methods like RSA key exchange, Diffie-Hellman (DH), and Elliptic-curve Diffie-Hellman (ECDH) have been well-studied \cite{harris2006rsa, rescorla1999diffie, bresson2007provably, mehibel2017new}.
    % Key exchange methods like RSA key exchange, Diffie-Hellman (DH), and Elliptic-curve Diffie-Hellman (ECDH) have been well-studied \cite{harris2006rsa, 4494096, 1055638, 10.1145/2810103.2813707, rescorla1999diffie, bresson2007provably, mehibel2017new, barker2017recommendation, haakegaard2015elliptic}.
    % with new exchange protocols still under development for post-quantum security, \textit{i.e.}, supersingular isogeny key exchange\cite{azarderakhsh2017supersingular, galbraith2016security,faz2017faster}. 
    Public key infrastructure (PKI)\cite{maurer1996modelling,blaze1998keynote, housley1999internet, weise2001public} has been developed over the years to provide authenticity guarantees. 
    % The implementation of the key exchange mechanism can be isolated from the rest of SE Enclave with the only exception of the key register itself. Our focus is on the execution after the shared key has been established and installed in the key register.
    \rev{In SE, key exchange can be done using a standard small-footprint hardware-only implementation in the Enclave, as in typical Hardware Security Modules~\cite{attridge2002overview}, which puts the key directly into the key register and does not interact with the rest of the Enclave. The only output of the key register is shown in Figure~\ref{fig:se_baseline} and our verification proves the key is secure {\it once in the key register}. Verifying key exchange is orthogonal to this work.}
    \item \textit{Encryption Scheme.} We assume popular encryption algorithms are strong and robust against crypto-analysis when paired with long enough keys. Multiple hardware-friendly encryption schemes have been proposed and can be used in SE, \textit{e.g.}, AES-128\cite{heron2009advanced, hamalainen2006design}, Simon-128/128\cite{7167361}, QARMA$_{11}$-128-$_{\sigma 1}$\cite{avanzi2017qarma}.
\end{itemize}
% \vspace{-1mm}
A more detailed account of RTL-level assumptions that flow from the above threat model will be discussed in \S~\ref{subsec:assumptions}.

\section{SE ISA Modeling and Analysis}\label{sec:isa}

A conventional ISA does not place any requirements on microarchitectural states and state transitions as it only requires functional correctness. However, the SE ISA needs to make explicit and verifiable rules such that the system can guard off software-based attacks.
%malpractice, regardless of intentional or not, based on the ISA specification alone. 
On the hardware side, the microarchitectural implementations need to faithfully follow the design requirements set by the ISA in a verifiable manner. The security properties resulting from these rules should be attested with formal proofs on the implementation side.

From the perspective of a program defined in such an ISA, there are two types of data: user \stress{private data} and \stress{public data}. Private data are sensitive data that are always encrypted under secured keys. The value of private data should only be visible to trusted and formally verified hardware components and remain invisible to any software or untrusted components. Public data are non-sensitive data that are stored in plaintext, \textit{i.e.}, program code, public constants, etc. While the conventional ISA is sufficient to operate on public data, a specialized set of instructions needs to be implemented in order to process private data securely. This set of secure instructions specifies additional requirements, which are not needed for public data only programs, to securely execute programs that compute on secured sensitive data. While previous work \cite{se-paper} has described these properties informally, we give precise definitions of the properties in this work and state proof obligations for the RTL verification to ensure that the implementation is faithful to the ISA specification. These requirements can be organized into two groups: \stress{Direct Disclosure Safety} and \stress{Indirect Disclosure Safety}. We will now consider them one at a time in the following sections.

\subsection{ISA Direct Disclosure Safety Requirements}
\label{sec:disclosure}

SE ISA must define the trust boundary between secure instructions and insecure instructions. This boundary is then guaranteed by the architecture. Inappropriate disclosure at the ISA level can happen in two ways: first, instructions could directly disclose the private plaintext data to insecure locations (\textit{e.g.}, registers that are readable by the attacker); second, the ciphertext is produced by a weak encryption cipher, making it vulnerable to cryptanalysis attacks. 
Therefore, for any valid SE program, which is a sequence of instructions defined by the secure ISA, we have the following two requirements. First, we put the obvious restriction that there are no direct disclosures; and second, we put a requirement on the quality of the ciphertexts produced by the secure instructions.

\subsubsection{Direct Information Disclosures}%
\label{subsub:direct-disclosure}
SE instructions take either a ciphertext or a public plaintext and produce a ciphertext as a result. Any secret data must never be disclosed to insecure (\textit{i.e.}, non-SE) instructions. This is reflected at both architectural and microarchitectural levels since SE uses encryption to hide sensitive information from untrusted software and hardware. The encryption module marks the trust boundary between the hardware components since all data is re-encrypted before it is emitted out of the trusted execution component and exposed to untrusted software and hardware. Encrypted data can be safely stored in untrusted storage or computed on by an insecure instruction. 

The instruction level properties lift to programs naturally. All SE instructions produce ciphertext and there is no decrypt instruction in the ISA. Thus, non-SE instructions cannot get access to the plaintext. From this, we conclude that any composition of SE instructions and non-SE instructions cannot disclose the secret data. 
\rev{
\subsubsection{Quality of Ciphertexts}
\label{subsub:quality-of-crypto}

Ciphertexts need to be safe from cryptanalysis when disclosed to an attacker of reasonable strength.
In this work, we consider security against chosen-ciphertext attacks (CCA), a standard attack model for cryptanalysis where the attacker has oracle access to the encryption function and an arbitrary collection of old ciphertext-plaintext pairs from the SE Enclave.
This model is more powerful than that of chosen plaintext attacks (CPA). 
We assume that the attacker has bounded computational power and storage space, and does not have access to the encryption key a priori. 
If the encryption scheme used by the SE Enclave has indistinguishability under CCA (termed CCA Security), then an attacker has close to a random-guess success rate in learning the plaintext value of a new ciphertext released by the SE Enclave, even if the attacker has amassed an arbitrary collection of previous ciphertext-plaintext pairs. In the following sections, we detail the encryption scheme used by the SE Enclave and how it achieves CCA security. 

The single-instruction CCA security described above immediately lifts to programs involving multiple instructions since CCA security in the single-message case implies CCA security in the multiple-messages scenario. 

}

\subsection{ISA Indirect Disclosure Safety Requirements}
\label{sec:side-channel}
Another important requirement that the ISA specifies and the underlying design should enforce is that there cannot be any observable digital effects correlated to the plaintext value of the secrets. The most prominent indirect disclosure is based on the digital side channel of the program's timing behavior. Similarly, the control flow and memory access pattern of the program can produce information about private user data without direct disclosure. The ISA must restrict all three indirect disclosures.

\subsubsection{Instruction Stream Side-Channels}%
\label{subsub:instruction-stream}

It is necessary that knowledge of which instructions get executed does not yield any more information than what is known at compile time. 
Intuitively, such information can be derived only if there is a relationship between the control-flow structures of SE and the secret data of the user. However, insecure control-flow instructions do not see secret values by the definition of the ISA thus control-flow decisions cannot give any information about the secret. The only control-flow instruction that operates on secret data is the secure \verb|CMOV| instruction. However, as a secure SE instruction, it always produces ciphertexts with the \rev{property described in \S\ref{subsub:quality-of-crypto}}. 
% Unlike general control-flow instructions, like \verb|jmp|, which influence which instructions are executed, \verb|CMOV| always moves encrypted (and thus indistinguishable data) into the same destination register. 
Unlike general control-flow instructions, like \verb|jmp|, which influence which instructions are executed, \verb|CMOV| always moves encrypted (and thus indistinguishable\rev{)} data into the same destination register. 
Therefore, although \verb|CMOV| makes a move decision based on secret data, it does not expose the secret. Consequently, observing the stream of executed instructions does not give the attacker a better-than-chance shot at guessing the secret. This property lifts to the program level from the instruction's property since all other disclosures from secure instructions to insecure instructions are removed by the direct disclosure restrictions from \S~\ref{subsub:direct-disclosure}. 

\subsubsection{Address Stream Side-Channels}%
\label{subsub:address-stream}
Side channels in programs can also exist in their address streams where memory access patterns can expose information about private data indirectly. Similar to \S~\ref{subsub:instruction-stream} above, such indirect disclosure is possible only when there is a relationship between the user's private data and memory access instructions. However, the insecure memory instructions, such as \verb|load| and \verb|store|, do not have access to secret values. The only instruction that writes data to a location is the secure \verb|CMOV| instruction, which we have argued in \S~\ref{subsub:instruction-stream} to be side-channel free. 

Since all secure SE instructions do not disclose secrets by \S~\ref{subsub:direct-disclosure} and the ciphertexts produced \rev{have the property described in \S\ref{subsub:quality-of-crypto}}, we conclude that all programs composed of SE instructions do not leak secret information through their address access streams. 
%We conclude that all programs composed of SE instructions do not leak secret information through their address access streams. 

\subsubsection{Instruction Timing Side-Channels}%
\label{subsub:instruction-timing}
A final possible side channel we consider is timing variance in secure instructions that can leak information about the secret data indirectly. First, we require all secure instructions not to have timing variance that correlates to secret data. By the direct disclosure argument in \S\ref{subsub:direct-disclosure} we know that insecure SE instructions do not have access to secret data thus any timing variance they might have in their execution (including nondeterministic variance) cannot be correlated to secret values. From these two properties of instructions, we can lift to program-level properties naturally. Specifically, a program composed of SE instructions executed on two different secret data will have exactly the same timing behavior if we fix the rest of the execution environment. All variations that might occur in this setting would be caused by insecure instructions possibly exhibiting nondeterministic behavior, which is always independent of the secret data. Thus the timing of programs does not reveal any secrets.

\subsection{Formalization of ISA Properties}
\label{sub:formalization-isa}
While designing an ISA is already a challenging task that must carefully tread the interface between the hardware's behavior and the software's expectations, stating security principles at the ISA level is an even harder task. Fortunately, the SE ISA comprises a few unique instructions with precise semantics and security requirements. This allows us to give a formal treatment to some of the informal descriptions presented in \S~\ref{sec:disclosure} and \S~\ref{sec:side-channel} regarding disclosure and side channels, respectively. To formally reason about the ISA level properties we first start by defining a minimal syntax recognized by the secure SE processor. We allow an arbitrary but finite set of registers observable by the attacker except for the \verb|keyReg| which is private to the secure SE Enclave. The instruction $\mathsf{enc~r_1}$ is the encryption instruction that encrypts the value in the parameter register and places the result back into the same register. We abstract all binary operations, \textit{e.g.}, addition, subtraction, and shifts, to the instruction $\mathsf{bop~r_1, r_2}$ where some operation is performed on the values of the parameter registers, and the result is placed into the first parameter. Finally, a \verb|cmov| is written as an if-else statement with a single assignment instruction as the body. Note that, unlike standard imperative languages, we do not allow conditional \verb|if| statements with arbitrary bodies and there is no \verb|while| construct either. The sequential composition operator $\mathsf{c;p}$ is a syntax restriction, forcing it to be right-associative. Enforcing it syntactically simplifies the semantics and type system, yet has no significant effect since sequential composition is associative. Figure \ref{fig:syntax} presents this syntax of the SE language.

\begin{figure}[h]
    \centering
    % \vspace{-3mm}
    \begin{bnf*}
        \bnfprod{Registers $e$}{\bnfts{$r_1$} \bnfor \cdots \bnfor \bnfts{$r_{n}$} \bnfor {\sf keyReg} \bnfor b \bnfor [b]}\\
        \bnfprod{Commands $c$}{\bnfts{enc $r_1$}  \bnfor \bnfts{bop $r_1,r_2$} \bnfor \bnfts{skip}}\\
        \bnfmore{\bnfts{if~$r_1: r_2 \leftarrow r_3~\mathsf{else}~r_2 \leftarrow r_4$}}\\
        \bnfprod{Program $p, q$}{{c} \bnfor {c; p}}
    \end{bnf*}
    % \vspace{-6mm}
    \caption{Minimal syntax for the SE ISA.}
    % \vspace{-2mm}
    \label{fig:syntax}
\end{figure}

The semantics of the language is given by the small-step semantics in Figure \ref{fig:small-step}. The semantics is entirely standard in imperative languages with the exception of the encryption and decryption operations and the time function $t$. First, all values are defined on finite length bits, \textit{i.e.}, $bits = \{0, 1\}^\rev{j}$ \rev{where $j \in \mathbb{N}$ is a non-deterministically picked natural number}, which allows bit-level operators to be well defined in \verb|bop|. Bits of all zeros can be interpreted as a \rev{boolean} false and, conversely, bits of all ones can be interpreted as a \rev{boolean} true. 

\rev{
Next, we use a simple and well-known encryption scheme which we present here for completeness. Details of this construction can be found in introductory textbooks such as \cite{katz2020introduction, joyofcryptography}. Let $Func$ be the set of all functions of type $\{0,1\}^n \rightarrow \{0, 1\}^n$ and $F: \{0, 1\}^n \times \{0, 1\}^n \rightarrow \{0, 1\}^n$ be a keyed pseudorandom function inducing a distribution on $Func$. As usual, a bijective pseudorandom function $P$ is a pseudorandom permutation. We will write $F_k$ and $P_k$ defined as $F_k(\cdot) = F(k, \cdot)$ and $P_k(\cdot) = P(k, \cdot)$. A strong pseudorandom function $F$ is a pseudorandom function such that any polynomially bounded adversary $\mathcal{A}$ has a negligible advantage in differentiating between $F$ and the random function $F'$ even when given the inverse function $F^{-1}$. A strong pseudorandom permutation $P$ is a bijective strong pseudorandom function.

Now we can define the encryption scheme used by the SE enclave. Let $\mathcal{SE} = \langle encrypt(\cdot, \cdot), decrypt(\cdot, \cdot) \rangle$. The encryption key $k$ is uniformly picked from keys of length $s + n$. The message space is $\mathcal{M} = \{0, 1\}^n$, \textit{i.e.}, messages of length $n$. The encryption scheme is constructed as follows for any $m \in \mathcal{M}$:
\begin{align*}
    & u \leftarrow uniform(s) && encrypt(k, m) = P_k(m || u) \\ 
    &decrypt(k, m) = P_k^{-1}(m)[0:n] && decrypt(k, encrypt(k, m)) = m \end{align*}
\noindent
for any strong keyed pseudo-permutation $P$ of block length $s + n$, where $||$ is defined as string concatenation. The random value $u$ is sampled from the uniform distribution of strings of length $s$.

This scheme, $\mathcal{SE}$, is known to be secure against chosen ciphertext attacks (CCA-secure) because an attacker has only negligible advantage in learning about $m$ given the ciphertext $encrypt(k, m)$. In Appendix \ref{appx:crypto-def}, we give a formal definition of CCA security (Definition \ref{def:cca-security}) and show that $\mathcal{SE}$ is  CCA-secure (Theorem \ref{thm:se-security}).  
%where the polynomially-bounded attacker has exactly $\frac{1}{2}$ advantage in the $Exp_{\pi, \mathcal{A}}^{CCA}$ experiment instead of $\frac{1}{2} + negl(n + s)$. 

To simplify the semantics of the ISA, we implicitly make the assumption that the polynomially-bounded attacker $\mathcal{A}$ has a stricter bound on the number of queries it can make to the encryption oracle. Intuitively, this assumption guarantees that the probability of the event that the same random value $u$ is used for two different encryption oracle calls is effectively zero. Given this assumption, we take the distribution of the ciphertexts to be truly uniform.\footnote{\rev{This effectively means the encryption scheme is a perfect encryption scheme. Strictly speaking, the pseudorandom permutation of the encryption scheme needs to be swapped for a random permutation to achieve this. But, any polynomial attacker only loses a negligible advantage due to this swap.}}
This assumption only allows us to avoid the more complicated probabilistic semantics that would be required to properly model this negligible chance and otherwise has no effect on the semantics. %In practice, one can always match this assumption in the concrete security of the system by increasing the length of the random strings and length of the key used by the encryption function. 

In the ISA semantics, we assume there is a truly uniform function that can generate $s$ bits for every encryption query and that a strong pseudorandom permutation is used. We justify these assumptions in \S\ref{sub:isa-to-rtl} by discharging verification responsibilities to the RTL-level verification, as summarized in Table~\ref{table:isa-property}. This leaves us with two tasks at the ISA level: \textit{i)} ensure that $u$ is of some pre-fixed length $s$ on every encryption, and \textit{ii)} ensure that a fresh $u$ from the random number generator is used for each encryption query. We formalize these two tasks in the semantics in the rest of this section.
}

The state of the system is represented by a set of registers $R$ containing values of $b \in bits$ and the syntactically decorated $[b]$. Marking values representing ciphertexts with brackets allows us to consider all ciphertexts to be effectively equivalent once we define the equivalence class. The function $\sigma:R \rightarrow bits$ provides the mapping. Let $\Sigma$ be the set of state maps $\sigma$ and allow $\sigma_1, \cdots, \sigma_n$ to range over $\Sigma$. There are three syntactic categories: programs, commands, and registers. The small-step arrow $\rightarrow_r$ produces bits by reading registers. We combine programs and commands into one syntactic category for the semantics, since all commands are programs; this makes the semantics more readable. The small-step arrow $\rightarrow$ for this combined syntactic category is defined over configurations of $\langle p, \sigma \rangle$. \rev{We write $u \sim uniform(s)$ to say the value $u$ is sampled from the uniform distribution of strings of length $s$.} The function $t: \Sigma \rightarrow \mathbb{N}$ models the timing behavior of the system by allowing an arbitrary finite function to decide how long it takes for the instruction to complete execution. In a concrete design, the time taken can depend on a number of system and instruction level properties thus we allow the system to make this decision based on the entire system state $\sigma$. 

The SE language semantics is a restricted fragment of standard imperative languages. For example, the conditional move commands (\verb|CMOV-T| and \verb|CMOV-F|) reflect this in that the bodies are single assignment instructions instead of the traditional recursive bodies. Additionally, the assignment is forced to the same location in both the true and false branch of the conditional. There is no use of exposed \verb|store| or assignment operator in the language, but one could mimic it using the existing operators with the restriction that only ciphertext is written to locations. The occurrence of register symbols is treated as usual as a free variable (\verb|REG|). The encryption instruction (\verb|ENC|) encrypts the value found in the parameter register and places it back in the parameter. Similarly, the binary operations instruction (\verb|BOP|) first decrypts the values, and computes on the plaintext values using the semantic operator $\oplus \in \{+, <<, >>, -, \cdots\}$, then finally re-encrypts the resulting value using the $encrypt$ relation, and writes the ciphertext back into the first operand register. Finally, the sequence operator (\verb|SEQ|) takes single steps transforming the head of the sequence until the base case of \verb|skip| is reached. 

Since the \verb|SEQ| is the only inductive rule in the semantics, the program structure is that of a list instead of a tree as usual. Consequently, the small-step semantics never diverges in its execution. This property is essential for the soundness proof in Theorem \ref{thm:soundness}. The security reasoning is done within the type system thus keeping the semantics standard.

\begin{figure}[!h]
\vspace{-0.5em}
\begin{mathpar}
    \inferrule[cmov-t]{
        \snr{r_1}{\sigma}{[c_1]}\\
        \snr{r_3}{\sigma}{[c_3]}\\\\
        \snr{keyReg}{\sigma}{k}\\
        decrypt(c_1, k) = true\\
        decrypt(c_3, k) = m\\
        \rev{u \sim uniform(s)}\\
        \rev{encrypt(m||u, k) = [c_5]}\\
    }{\sn{\mathsf{if}~r_1: r_2 \leftarrow r_3~\mathsf{else}~r_2 \leftarrow r_4}{~\sigma}{skip}{\sigma[[c_5]/r_2]}{2}} \and
\inferrule[cmov-f]{
        \snr{r_1}{\sigma}{[c_1]}\\
        \snr{r_4}{\sigma}{[c_4]}\\\\
        \snr{keyReg}{\sigma}{k}\\
        decrypt(c_1, k) = false\\
        decrypt(c_4, k) = m\\
        \rev{u \sim uniform(s)}\\
        \rev{encrypt(m||u, k) = [c_5]}\\
    }{\sn{\mathsf{if}~r_1: r_2 \leftarrow r_3~\mathsf{else}~r_2 \leftarrow r_4}{~\sigma}{skip}{\sigma[[c_5]/r_2]}{2}} \and  
    \inferrule[reg]{
        \sigma(r_1) = b
    }{\snr{r_1}{\sigma}{b}} \and
        \inferrule[enc]{
        \snr{r_1}{\sigma}{n}\\
        \snr{keyReg}{\sigma}{k}\\
        \rev{u \sim uniform(s)}\\
        \rev{encrypt(n||u,k) = [c_1]}
    }
    {\sn{\mathsf{enc}~r_1}{\sigma}{skip}{\sigma[[c_1]/r_1]}{1}}\and
    \inferrule[bop]{
        \snr{r_1}{\sigma}{[c_1]}\\
        \snr{r_2}{\sigma}{[c_2]}\\
        \snr{keyReg}{\sigma}{k}\\
        decrypt(c_1, k) = n\\
        decrypt(c_2, k) = m\\
        \rev{u \sim uniform(s)}\\
        \rev{encrypt((n\oplus m) || u, k) = [c_3]}
    }{\sn{\mathsf{bop}~r_1~r_2}{\sigma}{skip}{\sigma[[c_3]/r_1]}{2}}\and
    \inferrule[seq]{\ }{\sn{\mathsf{skip; q}}{\sigma}{q}{\sigma}{1}}\and
    \inferrule[]{\sn{\mathsf{c}}{\sigma}{\mathsf{skip}}{\sigma'}{1}}{\sn{\mathsf{c;p}}{\sigma}{skip;p}{\sigma'}{1}}
\end{mathpar}
\vspace{-2mm}
\caption{The small-step semantics of the SE Enclave.}
\vspace{-5mm}
\label{fig:small-step}
\end{figure}

% \jb{In the semantics, I don't see what the CTX rule brings beyond SEQ. We can have one or the other, but I don't think we need both, they subsume each other. And if we have to pick, I'd rather pick SEQ which is much simpler. That also means getting rid of contexts in the syntax.}

The type system used to reason about the security of information flow in the SE language is presented in Figure \ref{fig:type-system}. The types are generated by the following grammar
\vspace{-1mm}
\begin{bnf*}
        \bnfprod{Security labels $\ell$}{\bnfts{\pu} \bnfor \bnfts{\pr}}\\
        \bnfprod{Program Types $\tau$}{\bnfpn{$\ell$}~ prog \bnfor \bnfpn{$\ell$}}\\
\end{bnf*}
\vspace{-8mm}

Let $L = \{\ell_1, \cdots, \ell_n\}$ be the set of security labels and $\mathcal{L} = \langle L, \leq \rangle$ by the bounded security lattice generated by $L$. We assume $L = \{\pu, \pr\}$ and that $\pr$ is the top of the lattice and $\pu$ is the bottom of the lattice. Generally, any security lattice can be decomposed into a low and high partition so our assumption is without loss of generality. 
%The typing environment $\G$ maps register locations to their types. We consider all register locations to be $\mathsf{\pu}$ except the \verb|keyReg| which is marked as public, thus,
The typing environment $\G$ maps register locations to their types. We consider all register locations to be $\mathsf{\pu}$ except the \verb|keyReg| which is marked as \rev{private}, thus,
\[
\G(r) = \begin{cases}
\mathsf{\pr}, &r = keyReg\\
\mathsf{\pu}, &\text{otherwise}
\end{cases}
\]

The rule \verb|REG| simply states this in the inductive rules. All constant values from $bits$ start out as $\pum$ (rule \verb|CONST|). Similarly, the \verb|skip| instruction is always typed as $\pum$ (rule \verb|SKIP|). The instruction \verb|seq| is typed as $\pum$ if the two programs that are composed are already typed as $\mathsf{\pud}$. The instructions \verb|CMOV| and \verb|BOP| are only required to demonstrate their operands are $\pum$ and can be immediately typed as $\mathsf{\pud}$.\footnote{\rev{It is not always the case that functions with $\mathsf{\pu}$ inputs will output $\mathsf{\pu}$ outputs. These typing rules for CMOV and BOP are sound because the security features of the semantics ensures the output is encrypted with fresh salt.} }

\begin{figure}[!htb]
% \vspace{-0.5em}
\begin{mathpar}
  \inferrule[reg]{\G(r) = \ell}{\tp r \ell}\and
  \inferrule[const]{ b \in bits}{\tp b \pu}\and
  \inferrule[enc]{
    \tp {r_1} \pu
  }{\tp {enc ~r} \pud }\and
  \inferrule[skip]{\ }{\tp {skip} \pud }\and
  \inferrule[seq]{
    \tp {p_1} \ell'~prog \\
     \tp {p_2} \ell''~prog\\
     \ell = \ell' \sqcup \ell''
  }{\tp {p_1; p_2} \ell~prog}\and
  \inferrule[bop]{
    \tp{r_1} \pu \\
    \tp{r_2} \pu \\
  }{\tp {bop~ r_1~r_2} {\pud} }\and
  \inferrule[cmov]{
    \tp {r_1} \pu \\
    \tp {r_2} \pu \\
    \tp {r_3} \pu \\
    \tp {r_4} \pu 
  }{\tp  {if~r_1:r_2 \leftarrow r_3~else~r_2 \leftarrow r_4} \pud} \and
\end{mathpar}
\vspace{-2mm}
\caption{A high-level security type system for the secure SE language.}
\vspace{-2mm}
\label{fig:type-system}
\end{figure}

Next, we define \rev{an equivalency of states of the system and, specifically,} low-equivalency following the convention in the literature \cite{volpano1996sound, ferraiuolo_hyperflow_2018, myers2011proving}. 

\begin{definition}[Low-equivalent]
\label{def:equiv}
First, on the finite length bits $b$ and $[b]$, we define the equivalence class generated by the rules\footnote{\rev{This equivalence class is justified by the fact that the ciphertexts have a uniform distribution. Thus to the attacker, any two ciphertext carry the same information and the attacker shouldn't have a reasonable preference between any two ciphertexts.}},
\begin{mathpar}
    \inferrule[equiv-br]{b, b' \in bits }{[b] \approx [b']}\and
    \inferrule[equiv]{b, b' \in bits \\ b = b'}{b \approx b'}
\end{mathpar}
%The equivalence lifts to $\Sigma$ quite naturally, \yf{we might not need this. should delete if not}
%\[
%\sigma \approx \sigma' \text{ if } dom(\sigma) = dom(\sigma') \land \sigma(r) \approx \sigma'(r)
%\]
Now we can define low-equivalence. In context $\Gamma$, states $\sigma, \sigma'$ are low equivalent $\sigma \approx_l \sigma'$ if they are equivalent on all low locations,
 \begin{align*}
   \Gamma\vdash\sigma(r) \approx \sigma'(r) \text{ for all $r$ where } \G(r) \leq l
 \end{align*}
\end{definition}
We can now restrict the behavior of $t$ using the definition above. We require that $t$ decides the amount of time taken by the instruction only using \verb|public| data,
\begin{align}
\label{eq:time-invariance}
\text{ if } \Gamma\vdash\sigma \approx_l \sigma'  \text{  then }t(\sigma) = t(\sigma')
\end{align}
Intuitively, this means the timing function $t$ is influenced only by the location of level $l$ or lower (\textit{i.e.}, there is no timing dependency between the private data and the public data that could lead to timing side channels). \textit{This corresponds to the instruction timing property in \S~\ref{subsub:instruction-timing}.}

We just need one preliminary security lemma toward the main theorem now. For a well-typed program $c$, taking one step in the small-step semantics on two different states of the system that agree on publicly visible state locations will always produce states that continue to agree on publicly visible state locations. Moreover, the timing behavior of $t$ will also be equivalent on the two final states. This means changing any private locations in the state will not have an observable change on the publicly observable locations of the output states or the timing behavior of the program. The attacker that can observe only public locations (thus not inside of SE) cannot tell the difference between two executions of a program where the private data might be different and thus cannot derive additional knowledge about the secret user data from the public data. This property corresponds to \S\ref{subsub:direct-disclosure}. Lemma \ref{lma:single-command} states this property formally now. 

\begin{lemma}[Single Step Security]
\label{lma:single-command}
If
\begin{enumerate}
    \item $\tp c {\ell}$
    \item $\G \vdash \sigma_1 \approx_l \sigma_2$
    \item $\langle c, \sigma_1 \rangle \xrightarrow{t(\sigma_1)} \langle \mathsf{c_1'}, \sigma_1' \rangle$ 
    \item $\langle c, \sigma_2 \rangle \xrightarrow{t(\sigma_2)} \langle \mathsf{c_2'}, \sigma_2' \rangle$ 
    \item $dom(\G) = dom(\sigma_1) = dom(\sigma_2)$
\end{enumerate}
then we have $\G \vdash \sigma_1' \approx_l \sigma_2'$ and $t(\sigma_1) = t(\sigma_2)$
\end{lemma}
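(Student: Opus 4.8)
\textbf{Proof proposal for Lemma \ref{lma:single-command}.}

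The plan is to proceed by case analysis on the typing derivation of $c$, since each command form has exactly one applicable typing rule and exactly one applicable (family of) small-step rule(s). The key structural observation is that the final step in any typing derivation for a command (as opposed to a sequence) must be one of \verb|ENC|, \verb|BOP|, \verb|CMOV|, or \verb|SKIP| — and in the first three the premises force every operand register to be typed $\pum$, hence by the definition of $\G$ none of the operand registers is \verb|keyReg|. This is the crucial leverage: since $\sigma_1 \approx_l \sigma_2$ only guarantees agreement on locations of level $\le l$, and the operands are all $\pum$, the two executions read \emph{the same plaintext or ciphertext bits} from every source register except possibly \verb|keyReg|, whose value we never expose.

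First I would dispatch the trivial cases. For \verb|SKIP|: if $c = \mathsf{skip}$ there is no small-step rule whose left side is a bare \verb|skip| (only \verb|SEQ|'s rule \verb|skip;q|), so hypotheses (3)–(4) are vacuous, or — reading the statement as applying to whatever single step is taken — the conclusion is immediate because $\sigma_1' = \sigma_1$ and $\sigma_2' = \sigma_2$. The timing equality $t(\sigma_1) = t(\sigma_2)$ in every case follows directly from the time-invariance constraint \eqref{eq:time-invariance} together with hypothesis (2): since $\sigma_1 \approx_l \sigma_2$, we get $t(\sigma_1) = t(\sigma_2)$ with no further work. So the real content is showing $\sigma_1' \approx_l \sigma_2'$.

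For the \verb|ENC| case: both executions read $r_1$, which is $\pum$, so by (2) they read $\approx$-equivalent values; the rule rewrites $r_1$ to a bracketed ciphertext $[c_1]$ in each run. Every location other than $r_1$ is untouched, so agreement is preserved there by (2); at $r_1$ both new values are of the form $[\,\cdot\,]$, hence related by \verb|equiv-br|. Thus $\sigma_1' \approx_l \sigma_2'$. The \verb|BOP| case is identical in shape: operands $r_1, r_2$ are $\pum$, the only updated location $r_1$ receives a bracketed value in both runs, and \verb|equiv-br| closes it. The \verb|CMOV| case needs slightly more care because the two runs may fire \emph{different} rules — one may take \verb|cmov-t| while the other takes \verb|cmov-f|, since the decrypted condition depends on the (possibly differing) ciphertext in $r_1$. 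But this does not matter: both \verb|cmov-t| and \verb|cmov-f| write a freshly-encrypted bracketed value $[c_5]$ into the \emph{same} register $r_2$ and modify nothing else, so regardless of which branch each run takes, the outputs agree on every location other than $r_2$ (by (2)) and at $r_2$ both hold bracketed values (related by \verb|equiv-br|). Hence $\sigma_1' \approx_l \sigma_2'$.

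The main obstacle — really the only subtle point — is the \verb|CMOV| case, specifically the realization that one must \emph{not} try to prove the two runs take the same branch (they need not), and instead observe that the equivalence class \verb|equiv-br| deliberately collapses all ciphertexts, so the destination register is equivalent no matter which branch fires. This is precisely why the equivalence relation in Definition \ref{def:equiv} is set up the way it is, and why the typing rules for \verb|CMOV| and \verb|BOP| are sound despite a $\pum$-in / $\pum$-out function not being generically valid (cf.\ the footnote in the paper): the semantics guarantees the written value is a fresh encryption, and the equivalence class erases its content. A secondary bookkeeping point is handling \verb|SEQ|: strictly, Lemma \ref{lma:single-command} is stated for a command $c$, but if one allows $c$ to be a sequence $\mathsf{c_0;p}$ the step either peels \verb|skip| off the front (identity on $\sigma$, done) or takes a sub-step on $\mathsf{c_0}$, which is handled by an inner induction reducing to the command cases above — again with timing handled uniformly by \eqref{eq:time-invariance}.
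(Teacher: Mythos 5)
Your proposal is correct and follows essentially the same route as the paper's proof in Appendix~\ref{appx:proofs-lemma}: case analysis over the step/typing rules, with the operand registers forced to be $\pum$, the single updated register receiving a bracketed fresh ciphertext in both runs (closed by \textsc{equiv-br}), \textsc{seq} handled by an inner induction, and the timing equality falling out of hypothesis~(2) together with Eq.~\ref{eq:time-invariance}. If anything, you are slightly more explicit than the paper about the \verb|CMOV| case where the two runs may fire different branch rules, which the paper glosses by writing a generic condition bit $b$; your observation that one must not try to synchronize the branches is exactly the right one.
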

\begin{proof}[Proof Sketch]
By induction on structure of the derivation of $\langle c, \sigma_1 \rangle \xrightarrow{t(\sigma)} \langle \mathsf{c_1'}, \sigma_1' \rangle$. The timing requirement is immediate from assumption (2) and Eq. \ref{eq:time-invariance}. See Appendix \ref{appx:proofs} for the full proof.  
\end{proof}

The security result can now be stated via the soundness of the type system. The soundness argument of Theorem \ref{thm:soundness} follows from Lemma \ref{lma:single-command} for the most part by generalizing the number of steps to an arbitrary number (\textit{e.g.}, multiple steps until the program is equivalent to \verb|skip|).

\begin{theorem}[Soundness]
\label{thm:soundness}
If
\begin{enumerate}
    \item $\tp c {\ell}$
    \item $\G \vdash \sigma_1 \approx_l \sigma_2$
    \item $\langle c, \sigma_1 \rangle \xrightarrow{n}_* \langle \mathsf{skip}, \sigma_1' \rangle$ 
    \item $\langle c, \sigma_2 \rangle \xrightarrow{m}_* \langle \mathsf{skip}, \sigma_2' \rangle$ 
    \item $dom(\G) = dom(\sigma_1) = dom(\sigma_2)$
\end{enumerate}
then we have $\G \vdash \sigma_1' \approx_l \sigma_2'$ and  $n = m$.
\end{theorem}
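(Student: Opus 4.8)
The plan is to lift Lemma~\ref{lma:single-command} from a single step to an arbitrary-length execution, exactly as the text anticipates, by induction on the number $N$ of single steps in the given derivation $\smallStep{c}{\sigma_1} \xrightarrow{n}_{*} \smallStep{\mathsf{skip}}{\sigma_1'}$ (where $n$ is the sum of the per-step labels $t(\sigma_i)$). I would actually prove the slightly stronger statement that, under the hypotheses, the $\sigma_2$-run also consists of exactly $N$ steps, that $\G \vdash \sigma_1' \approx_l \sigma_2'$, and that corresponding steps carry equal $t$-labels, so that $n = m$ follows by summing. Two auxiliary facts are needed before the induction. First, a \emph{control-flow determinacy} observation: whenever $\smallStep{c}{\sigma}$ can take a step, the resulting program $c'$ is a function of the syntactic shape of $c$ alone, independent of $\sigma$ and of the random salt drawn during the step. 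This is immediate from Figure~\ref{fig:small-step}: \texttt{ENC}, \texttt{BOP}, and \emph{both} \texttt{CMOV} rules have continuation $\mathsf{skip}$, and the two \texttt{SEQ} rules have syntactically determined continuations; in particular \texttt{CMOV-T} and \texttt{CMOV-F} yield the same continuation even though they branch on the decrypted secret condition. Second, a \emph{preservation} (subject-reduction) observation: if $c$ is well typed and $\smallStep{c}{\sigma} \rightarrow \smallStep{c'}{\sigma'}$ then $c'$ is well typed --- a routine check of the six semantic rules against Figure~\ref{fig:type-system} (e.g.\ $\mathsf{skip}$ has the public program type, and if $c;p$ is well typed then so are $q$ from $\mathsf{skip};q$ and $\mathsf{skip};p$ from $c;p$). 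I would also record that a step only rewrites one register in place, so $dom(\G)=dom(\sigma_1)=dom(\sigma_2)$ is preserved along both runs.

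For the base case $N=0$ we have $c=\mathsf{skip}$, $\sigma_1'=\sigma_1$, $n=0$; since $\mathsf{skip}$ is terminal (no rule applies to it), the $\sigma_2$-run has length $0$ as well with $\sigma_2'=\sigma_2$ and $m=0$, and $\G\vdash\sigma_1'\approx_l\sigma_2'$ is hypothesis~(2). For the inductive step, $c\neq\mathsf{skip}$, so the $\sigma_1$-run factors as $\smallStep{c}{\sigma_1}\xrightarrow{t(\sigma_1)}\smallStep{c'}{\sigma_1''}$ followed by an $(N-1)$-step run to $\smallStep{\mathsf{skip}}{\sigma_1'}$. Since the $\sigma_2$-run reaches $\mathsf{skip}$ by hypothesis, it cannot be stuck at $c$, so it also takes a first step, and by control-flow determinacy that step goes to the \emph{same} $c'$: $\smallStep{c}{\sigma_2}\xrightarrow{t(\sigma_2)}\smallStep{c'}{\sigma_2''}$. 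Lemma~\ref{lma:single-command} applied to $c,\sigma_1,\sigma_2$ gives $\G\vdash\sigma_1''\approx_l\sigma_2''$ and $t(\sigma_1)=t(\sigma_2)$; preservation gives $c'$ well typed; domains still agree. The induction hypothesis applied to $c',\sigma_1'',\sigma_2''$ then yields that the $\sigma_2$-run has length $N$, that $\G\vdash\sigma_1'\approx_l\sigma_2'$, and that the two tails accumulate equal time $n'=m'$. Adding the first step, $n=t(\sigma_1)+n'=t(\sigma_2)+m'=m$, closing the induction and proving Theorem~\ref{thm:soundness}.

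The step I expect to be the crux is control-flow determinacy and its interaction with the probabilistic encryption semantics: one must argue that neither a secret-dependent \texttt{CMOV} decision nor a freshly sampled salt can change which rule fires or the syntactic continuation, so that the two executions stay in lockstep and Lemma~\ref{lma:single-command} can be chained --- this is precisely where the SE ISA design choices (both \texttt{CMOV} branches write a fresh ciphertext to the \emph{same} register; no $\mathsf{while}$ and no general $\mathsf{if}$ body) are used, mirroring the informal reasoning of \S\ref{subsub:instruction-stream}--\S\ref{subsub:instruction-timing}. Because the semantics is probabilistic, the statement is read as holding for every realization of the salts along both runs; this is harmless, since all ciphertexts lie in one $\approx$-class, so low-equivalence of the intermediate states is insensitive to the salt values, and hence so is the per-step timing by Eq.~\ref{eq:time-invariance}. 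The remaining ingredients --- preservation, domain invariance, and the arithmetic of summing the $t$-labels --- are routine.
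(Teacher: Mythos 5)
Your proposal is correct and follows essentially the same route as the paper's Appendix~\ref{appx:proofs-thm}: induction on the number of steps, chaining Lemma~\ref{lma:single-command} via a type-preservation lemma and the observation that the two runs never diverge syntactically (so corresponding configurations share the same program), with $n=m$ obtained by summing equal per-step $t$-labels. The only cosmetic difference is that you peel off the first step and recurse on the tail, whereas the paper extends a $k$-step prefix by one step; the auxiliary facts and the substance of the argument are the same.
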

\begin{proof}[Proof Sketch]
By induction on the number of steps. Both the base case and inductive case are consequences of Lemma \ref{lma:single-command}. The timing property follows from the induction as well. See Appendix \ref{appx:proofs} for details.
\end{proof}
 
The type system and its soundness formalize the properties discussed in \S\ref{sec:isa} except \S\ref{subsub:instruction-stream} (Instruction stream) and \S\ref{subsub:address-stream} (Address Stream).  \S\ref{subsub:instruction-stream} and \S\ref{subsub:address-stream} refer to insecure instructions that can access all $r$ where $\tp r \pu$ but cannot decrypt ciphertexts since \verb|keyReg| is \verb|private|. An insecure instruction can at most corrupt ciphertexts but when executed on two low-equivalent states must produce a pair of low-equivalent states as well. Any timing variation in these instructions is not correlated to private data since the behavior of $t$ is restricted by low-equivalence. In the following section, we summarize the requirements the ISA places on the RTL-level to provide the security properties of Theorem \ref{thm:soundness}.

\subsection{Summary of Proof Obligations Discharged to RTL Verification}
\label{sub:rlt-obligations}

To achieve the above program-level properties, we expect the RTL implementation to satisfy some requirements about individual secure SE instructions. Note that we have no requirements for insecure instructions. We enumerate these requirements here and later show how the RTL-level verification formally guarantees these instruction-level properties. The properties are split into instruction-level properties which are properties that must be satisfied by each instruction, and a system-level property which is a more general requirement on the system. 

\rev{
In Table~\ref{table:isa-property}, property (P1) summarizes \S\ref{subsub:quality-of-crypto} and is required by all the direct and indirect disclosure arguments in \ref{sec:disclosure} and \ref{sec:side-channel} respectively. Both (P1.1) and (P1.2) are formalized in \S\ref{sub:formalization-isa} and Appendix \ref{appx:crypto-def}.} Property (P2) is required by \ref{subsub:instruction-timing} and Eq. \ref{eq:time-invariance} formally states this requirement. 
Finally, the system-level property (P3) is required by all the disclosure arguments since each assumes there is no way to get a plaintext from a ciphertext \rev{using the SE Enclave's key} except within the small SE Enclave. The formal semantics in Figure \ref{fig:small-step} and the concluding remarks regarding insecure commands in \S\ref{sub:formalization-isa} require it as well.

\begin{table}[ht]
% \vspace{-2mm}
\centering
\resizebox{\columnwidth}{!} {
\begin{tabular}[\aboverulesep=0ex, \belowrulesep=0ex]{@{\extracolsep{0pt}}|cc|c|c|}
% \bottomrule   
% &&\\[-1em]
\hline
\rowcolor[HTML]{f3f3f3} & \multicolumn{3}{c}{Instruction-Level Property}\vline \\
\hline
\multicolumn{1}{|c|}{\multirow{2}{*}{P1}} & Strong &P1.1 & \rev{$encrypt$ is a strong keyed PRP; $decrypt$ is its inverse}\\
\cline{3-4} 
\multicolumn{1}{|c|}{}& ciphertext  &P1.2& \rev{The value $u$ has length $s$ and is sampled from a uniform dist.}\\
\hline
\multicolumn{1}{|c|}{P2} & \multicolumn{3}{c}{Instructions have no secret dependent time variation}\vline \\
\hline
\rowcolor[HTML]{f3f3f3} & \multicolumn{3}{c}{System-Level Property}\vline \\
\hline
\multicolumn{1}{|c|}{P3} & \multicolumn{3}{c}{Re-encrypt all data for output. No decrypt operation outside SE Enclave} \vline \\ 
\hline
\end{tabular}
}
% \vspace{0.5mm}
\caption{Properties assumed at the ISA level and discharged to the RTL verification.}
\vspace{-7mm}
\label{table:isa-property}
\end{table} %primary Yonathan, Secondary Shibo Jean-Baptiste 

\section{RTL Security Properties}
\label{sec:definition}

In this section, we formally define the security properties in an RTL implementation of SE so as to support the instruction-level requirements from \S~\ref{sec:isa}.
\rev{We employ standard \emph{hardware information flow tracking (IFT)} to check RTL-level properties.
Briefly, if secret variables (plaintext and crypto key) do not leak to outputs of the Enclave, then the attacker is not able to infer secret values based on any observation and thus security is guaranteed.}

For the rest of this section, we will first discuss the \rev{security goal} at the RTL level, clarify the connection between ISA-level assumptions and RTL-level properties, then provide a formal definition for the \emph{hardware information flow} properties to be checked for the RTL design.

\subsection{\rev{RTL-Level Security Goal}}\label{subsec:assumptions}
% Note that it is not possible for the attacker to directly access the IO of internal modules such as the SE Enclave -- the attacker can only access the register file and memory or measure the execution time of some program.
% However, in order to limit our analysis to the SE Enclave only, we give more power to the attacker and assume that the attacker can directly access the IO of the SE Enclave. The motivation for limiting our analysis to the SE Enclave only is to reduce the footprint of the RTL design that needs to be verified for information flow. 
\rev{
We assume that the attacker can observe and control any signal, register, and memory location outside the boundary of SE Enclave (including the Enclave's outputs), but they cannot observe and manipulate signals and states within SE Enclave. 
SE Enclave is connected with the rest of the system with a well-defined IO interface. 
Such an assumption keeps the footprint of trusted RTL design minimal. 

Our RTL-level security goal is to prevent secret variables inside the Enclave from leaking to the outputs of the Enclave in two forms: functional leakage and timing leakage.
Functional leakage happens if the attacker can directly infer secret information from the result of the SE Instruction.
Timing leakage happens if the execution time of some instruction depends on the secret and the attacker may infer secret information by measuring the execution time.

%Next, we reason why eliminating these two types of leakage can guarantee RTL-level properties discharged from ISA level.
}

\begin{table}[ht]
\vspace{-1mm}
\centering
\resizebox{\columnwidth}{!} {
\begin{tabular}[\aboverulesep=0ex, \belowrulesep=0ex]{@{\extracolsep{0pt}}|c|c|}
% \bottomrule   
% &&\\[-1em]
\hline
\rowcolor[HTML]{f3f3f3} \rev{ISA-Level Requirement} & \rev{RTL-Level Property}\\
\hline
\rev{P1}& \rev{Functional Correctness of Crypto and RNG} \\
\hline
\rev{P2} & \rev{No Timing Leakage at the SE Enclave Outputs} \\
\hline
\rev{P3} & \rev{No Functional Leakage at the SE Enclave Outputs}\\
\hline
\end{tabular}
}
% \vspace{0.5mm}
\caption{\rev{Mapping between ISA-Level Requirement and RTL-Level Properties.}}
\vspace{-6mm}
\label{table:isa-rtl-mapping}
\end{table}

\rev{
\subsection{ISA-RTL Property Mapping}
\label{sub:isa-to-rtl}
Table~\ref{table:isa-rtl-mapping} provides the connection between requirements from the ISA-level and the properties to be checked at the RTL-level.

\vspace{1mm}\noindent \textit{P1:} The implementation is required to satisfy the requirements of the scheme $\mathcal{SE}$ defined in \S\ref{sub:formalization-isa} without the restriction of the semantics on the number of calls the attacker makes. Thus, the implementation supports the strictly stronger attacker of the CCA-security game in Appendix \ref{appx:crypto-def}. Two specific requirements are discharged on the cryptographic algorithms: the use of a strong pseudorandom permutation and availability of a truly random $s$-bit generator. For example, the RTL may implement AES-128 which is a block-cipher (thus $s=64, n=64$), and block-ciphers are an implementation of strong pseudorandom permutation~\cite[Ch. 3.6.4]{katz2020introduction}. Other implementation options, such as RSA, come from the closely related family of trapdoor permutations and satify this requirement as well.  
The values $u$, which is defined in \S\ref{sub:formalization-isa}, is generated by a hardware-based TRNG as shown in Figure~\ref{fig:se_baseline}. As discussed in \S\ref{subsec:root-of-trust}, we assume the existence of high-bandwidth and bias-free TRNGs and consider their design out of scope for this work. The quality of the random number generator can be checked using existing tools, \textit{e.g.}, Dieharder~\cite{brown2018dieharder}. Similarly, the functional correctness of the encryption and decryption units can be checked by existing techniques~\cite{kiniry2018formally} and is not discussed in this paper.

%While the ISA-level description does not specify the cryptographic algorithms, RTL should implement industry-standard cryptographic algorithm engines with high-standard RNG. To meet P1 requires the crypto and RNG algorithms to be implemented correctly in RTL. 

\vspace{1mm}\noindent \textit{P2:} Any secret-dependent execution time will result in timing leakage, \textit{i.e.}, secret dependent variation in the timing at which results are available at the SE Enclave outputs. Thus, we need to check there is no timing leakage at the SE Enclave outputs in the RTL.

\vspace{1mm}\noindent \textit{P3:} Any unencrypted secret at the data output would result in a functional leakage of the instruction result at the SE Enclave outputs. Thus, we need check there is no functional leakage at the SE Enclave outputs in the RTL.
Further, without any information (functional or timing) about the encryption key (which is only in the SE Enclave), it is impossible to decrypt the ciphertext outside of the Enclave. 

The notion of `leakage' can be formalized using standard hardware IFT, which we will define for the SE Enclave and discuss in the following subsections. 
}

% Next, we first formally define the SE Enclave and then formalize the notion of hardware information flow in the following subsections.

\subsection{SE Definition}
We use the following finite state machine (FSM) to represent an SE Enclave: $SE = (I,O,S,S_0,N,F)$, where:
\begin{itemize}[leftmargin=*]
    \item $I$ is a vector of input variables and the domain of $I $ is $ \mathbb{I}$.
    \item $O$ is a vector of output variables. The domain of $O$ is $\mathbb{O}$.
    \item $S$ is a vector of state variables and the domain of $S $ is $ \mathbb{S}$.
    \item $S_0$ is the initial value for $S$.
    % \item $S_s$ is a vector of secret state variables that are a subset of variables of $S$.
    % \item $K$ is the secret key for decryption and encryption that is stored in a register.
    % \item $R$ is a vector of random salts, the domain of $R$ is $\mathbb{R}$.
    % \item $R_0$ is the initial value $R$.
    \item $N: (\mathbb{S}\times \mathbb{I})\rightarrow \mathbb{S}$ is the next state function for $S$.
    % \item $N_R: \mathbb{R}\rightarrow \mathbb{R}$ is the next state function for $R$.
    \item $F: (\mathbb{S}\times \mathbb{I})\rightarrow \mathbb{O}$ is the output function.
\end{itemize}

The secrets, \textit{i.e.}, plaintext output $p$ of decryption and the secret key $k$ are also state variables in $S$.
$S$ also includes the random salt $r$ used as input to the encryption unit.

To describe the execution of the SE Enclave, we introduce the notion of a finite-length execution trace: 
$\Pi = (\pi_I, \pi_O, \pi_S) $ represents an execution for $n$ cycles where $\pi_I=(I_0, I_1, I_2, ... I_{n-1})$ is a trace of input values for each cycle, $\pi_O=(O_0, O_1, O_2, ... O_{n-1})$ is a trace of output values for each cycle, $\pi_S=(S_0, S_1, s_2, ... S_{n-1})$ is a trace of state variable values from each cycle.
The elements in $\pi_I, \pi_O, \pi_S$ follow the next state function $N$ and the output function $F$.
In the following parts of the paper, we may use $\pi_x=(x_0, x_1, x_2, ... x_{n-1})$ to denote the trace of a variable $x$ which consists of the value of $x$ at every cycle in the execution, and $\pi_Y=(Y_0, Y_1, Y_2, ... Y_{n-1})$ to denote the trace of a vector $Y$ which consists of the value of $Y$ at every cycle in the execution.

Next, we define information flow using execution traces.

\subsection{Classic Information Flow Definition}

The classic information flow definition is based on the well-known `non-interference' property~\cite{clarkson2010hyperproperties,goguen1982security}.
The key idea of this property is: if a variable $x$ never interferes with another variable $y$ in the system, then replacing $x$ with different values will never affect the value of $y$.
Let $s$ be a secret variable and $o$ be an output variable. In our setting, if there is no information flow from $s$ to $o$, then the value of $o$ is independent of the value of $s$, thus it is impossible to infer $s$ based on $o$.

$s$ not influencing $o$ means the value of $o$ at every cycle is not changed when the value of $s$ is replaced with an arbitrary value. %thus we need the notion of hardware trace defined previously.
Let $\Pi$ be an execution trace of length $n$, $Q$ be the vector state variables besides $s$, $\pi_s$ denote the trace of $s$, $\pi_{Q}$ denote the trace of $Q$, $\pi_o$ denote the trace of $o$, $F_o$ denote the output function for $o$, $N_{Q}$ denote the next state function for $Q$.
If in $\Pi$ we replace $\pi_s$ with a different trace $\pi_s'$ (it differs from $\pi_s$ in at least one cycle), then we can use $\pi_s'$ to compute the new trace of $Q, o$, \textit{i.e.}, $\pi_Q'$ and $\pi_o'$ as follows:
$$ Q_0' = Q_0, Q_i' = N_Q(I_{i-1}, Q_{i-1}', s'_{i-1}), 1\leq i<n$$ $$o_i'=F_o(I_i, Q_i', s'_i),0\leq i < n$$
There exists information flow from $s$ to $o$ if and only if
$$\exists \Pi, \exists \pi_s'$$
such that after computing $\pi_Q'$, $\pi_o'$,
$$\pi_o\neq \pi_o'$$ ($\pi_o, \pi_o'$ differ in at least one cycle)

However, we need to modify this classic information flow definition to analyze a design with encryption.
In the threat model, we assume that the ciphertext after encryption will not leak any information about the plaintext. However, changing the plaintext will necessarily change the ciphertext regardless of the encryption scheme.
Therefore, we would reach the conclusion that there is information flow from the plaintext to the ciphertext, but this would be a false alert.
We describe our solution to this issue in the following subsection by using the notion of `ciphertext declassification.'

\subsection{Information Flow with Ciphertext Declassification}
The idea of \textit{declassification}~\cite{sabelfeld2009declassification, askarov2007gradual} allows  information flow under a specific condition.
This allows information flow to go through a certain variable such as the ciphertext output of the encryption unit under the condition that the encryption is finished.
The similar idea is also applied in our ISA-level proof.

Let $s$ be a secret variable, $o$ be an output variable, and $c$ be the ciphertext output of encryption. %in the definition, we need to guarantee that when changing $s$ but not changing $c$, the value of $o$ will be influenced. 
Let $p$ be a predicate that represents the completion of encryption, \textit{i.e.}, $p$ is only true when encryption is completed.
The intuitive way to realize declassification is to model the design such that when $p$ is true, $c$ is replaced by a free variable $c_f$.
This cuts off the connection between $s$ and $c$ when $p$ is true and blocks the information flow under this condition, but not otherwise.

Next, we will give a formal description of the above method. We construct a new FSM $SE'$ from FSM $SE$ as follows.
(i) We add a free variable $c_f$ to the input vector $I$.
(ii) For the output function $F$ and next state function $N$, replace all occurrences of $c$ in their arguments using the following expression $p\;?\;c_f:c$.

Denote the new input vector as $I'$, the new output function as $F'$, and the new next state function as $N'$.
The only difference between FSM $SE'$ and FSM $SE$ is that in $SE'$, $c$ is replaced by $c_f$ when $p$ is true.
Then, in the original FSM $SE$ there exists information flow from $s$ to $o$ not going through $c$ when $p$ holds if and only if there exists information flow from $s$ to $o$ in $SE'$. 
We call the above technique `ciphertext declassification information flow'.

\subsection{Summary}
Based on our threat model 
%and the information flow definition, we are clear about the target of our analysis: 
we want to check if there exists information flow from the secrets to the Enclave outputs not going through the ciphertext after encryption.
\rev{We will demonstrate in \S~\ref{sec:eval_overview} how both functional and timing leakage can be captured using standard hardware IFT.} 

% Note that the information flow analysis allows us to not only capture functional and functional-timing leakage through the output value but also timing leakage through the output timing, which is the main cause of various side-channel attacks.
% A timing leakage will be captured as an information flow from secrets to the `valid' output because the `valid' output only goes high when the execution of the SE Enclave is finished for some instruction.
% This means the `valid' output reflects the timing variation of different instructions.
% Our idea can also be applied to other timing side-channel problems.
% For example, in Spectre~\cite{spectre} attacks, the attacker is measuring the execution time of certain load instructions while the execution time of any instruction can be reflected by the `commit' signal because every instruction is only considered as finished when it is committed.
% Thus, one can conduct the information flow check from secrets to the `commit' signal to determine whether there exists a timing side channel in the design~\cite{fadiheh2020formal}.

In the next section, we \rev{detail the SE implementations, including three secure implementations and four insecure implementations, then use hardware IFT to either detect leakage or prove security.}

%conduct our analysis of various hardware implementations of SE to evaluate its effectiveness: it should be able to prove the security of correct implementations and also detect the information leakage for buggy implementations. %Primary Qinhan, Secondary Sharad

\section{Implementation and Evaluation}
\label{sec:evaluate}

To evaluate the effectiveness and performance of our RTL verification technique, we implemented a collection of SE Enclave designs with different microarchitectural optimizations or security flaws. The goal of the evaluation is to check if our verification technique is sufficient to support common microarchitectural design optimizations and catch security flaws when they are present. In this section, we first describe the designs we implemented as the verification targets and then explain our verification process on each of those design options, followed by experiment results.

\subsection{SE RTL Enclave Design Details}
In this subsection, we describe the details of the SE Enclave RTL designs and the flawed SE Enclave designs we used to validate our verification technique. Overall, we implemented seven different designs. Three of the designs are safe and valid designs including one default design, one area optimized design, and one design with our advanced decryption cache optimization. We also implemented four designs with different types of vulnerabilities in them to validate our verification technique.

\subsubsection{Implemented Instructions}

All our prototype designs implemented 14 instructions in total as listed in Table~\ref{table:inst}. These 14 instructions fall in categories shown in Table \ref{table:isa-overview}. Instructions in class \textit{Shift}, \textit{Arithmetic}, \textit{Logical}, and \textit{Comparison} all take two 128-bit encrypted operands as inputs and generate one 128-bit encrypted output with a fresh salt value. \texttt{ENC} instruction takes one 64-bit plaintext input and generates a 128-bit output in its encrypted form. This instruction is to support adding a public plaintext value to an encrypted secret value. To do so, the developer first encrypts the public value with a fresh 64-bit salt and then carries out the operation with the corresponding instruction. \texttt{CMOV} is the only ternary instruction supported by SE Enclave. This instruction takes three operands as input: \textit{condition}, \textit{t\_value}, and \textit{f\_value}. The output result is based on the value of the condition: \textit{t\_value} if the condition is true and \textit{f\_value} otherwise. 
\rev{For each of the designs described below, there are two outputs: \texttt{Valid} and \texttt{Data}. \texttt{Valid} is effectively a completion signal that indicates whether the value at \texttt{Data} is the valid result for the most recent instruction.}
%%%%%%%%%%%%%%%%%%%%%%%%%%%%
\begin{table}[t]
% \vspace{-3mm}
\centering
\resizebox{0.68\columnwidth}{!} {
\begin{tabular}[\aboverulesep=0ex, \belowrulesep=0ex]{@{\extracolsep{0pt}}|c|c|c|c|}

\hline
Inst. & Description & Inst. & Description \\
\hline
\rowcolor[HTML]{f3f3f3}\multicolumn{2}{|c|}{Class: Shift} & \multicolumn{2}{|c|}{Class: Arithmetic}\\
\hline
\texttt{SLL} & Logic Left Shift & \texttt{ADD} & Add\\
\hline
\texttt{SLA} & Arith. Left Shift & \texttt{SUB} & Subtract\\
\hline
\texttt{SRA} & Arith. Right Shift & \texttt{MULT} & Multipliy\\
\hline
\multicolumn{2}{|c|}{\cellcolor[HTML]{f3f3f3} Class: Logical} & \texttt{MULTS} & Signed Multiply\\
\hline
\texttt{XOR} & Logical XOr & \multicolumn{2}{|c|}{\cellcolor[HTML]{f3f3f3} Class: Comparison} \\
\hline
\texttt{OR} & Logical Or & \texttt{LT} & Less Than\\
\hline
\texttt{AND} & Logical And & \texttt{LTS} &  Signed Less Than\\
\hline
\rowcolor[HTML]{f3f3f3}\multicolumn{2}{|c|}{Class: Encryption} & \multicolumn{2}{|c|}{\cellcolor[HTML]{f3f3f3} Class: Conditional}\\
\hline
\texttt{ENC} & Encrypt & \texttt{CMOV} & Conditional Move \\
\hline

\end{tabular}}
% \vspace{0.5em}
\caption{Instructions Implemented in SE Prototype. 
% Inst. = Instruction, Descr. = Description, Arith. = Arithmetic.
}
\vspace{-9mm}
\label{table:inst}
\end{table}
%%%%%%%%%%%%%%%%%%%%%%%%%%%%

\subsubsection{Default SE Enclave Design (Default)}
A simplified representation of the default design is shown in Figure~\ref{fig:se_baseline}. In this design, SE Enclave takes in three operands, an instruction, and a valid bit to signal the start of computation. On the output side, SE Enclave outputs a 128-bit always encrypted value with a valid bit signaling the end of the computation. We use a 10-round unrolled AES encryption and decryption unit. The key is stored in the key register within the SE Enclave. The ALU unit is a constant time unit that performs the computation on already decrypted plaintext values. For the purpose of evaluation, we used a Linear Shift Feedback Register(LSFR) with a random seed as our random number generator(RNG), which yields a 64-bit random number per cycle. Any random number generator that can generate 64 or more random bits per cycle can be a valid design candidate. The choice and implementation of an RNG are outside the scope of this paper.

The default design can be fully pipelined. The input ciphertext is fed into the decryption unit first. After decryption, the 64-bit plaintext value goes through the ALU for computation. The completed result is padded with a newly generated 64-bit salt from RNG before being sent to the encryption unit. 

% \vspace{-3mm}
% \begin{figure}[h]
%     \centering
%     \includegraphics[width=2.9in,clip]{Evaluation/Unrolled vs Rolled v2.png}
%     \vspace{-3mm}
%     \caption{Unrolled AES (a) vs. Rolled AES (b)}
%     \vspace{-4mm}
%     \label{fig:two_aes}
% \end{figure}
% \vspace{-1mm}
\begin{figure}[h]
    \centering
    \includegraphics[width=0.48\textwidth,trim={0 13.1cm 9.6cm 0}, clip]{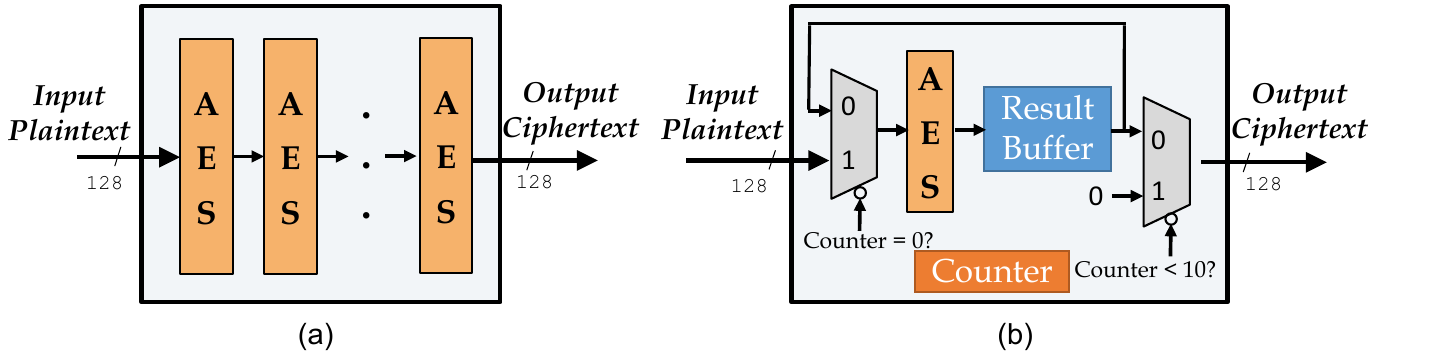}
    \vspace{-6mm}
    \caption{Unrolled AES (a) vs. Rolled AES (b)}
    \vspace{-4mm}
    \label{fig:two_aes}
\end{figure}
\subsubsection{Optimized Architecture}

We introduced two design variants of SE Enclave enabled by common optimizations: one optimized for area and the other optimized for performance.

\emph{Design with Rolled-Crypto Unit (Rolled AES)}:
One optimization %to enable higher utilization of SE Enclave 
is to roll crypto units and thus make the SE Enclave an area-optimized architecture. The rolled AES uses a single shared register for each round and stores the intermediate results in the register after the completion of each round. The rolled architecture cannot be pipelined as it has only one register to hold intermediate AES encryption/decryption results after each round, thus the upstream data would need to be blocked until the completion of the full 10 rounds of AES encryption and decryption. Figure~\ref{fig:two_aes} shows the comparison between unrolled and rolled AES encryption.

Note that it is important for a rolled AES to prevent partially encrypted ciphertext from flowing outside of the encryption module because only fully encrypted ciphertext is considered secure.

\begin{figure}[t]
    \centering
    % \vspace{-2em}
    \includegraphics[width=0.46\textwidth,trim={0 5cm 2cm 2cm},clip]{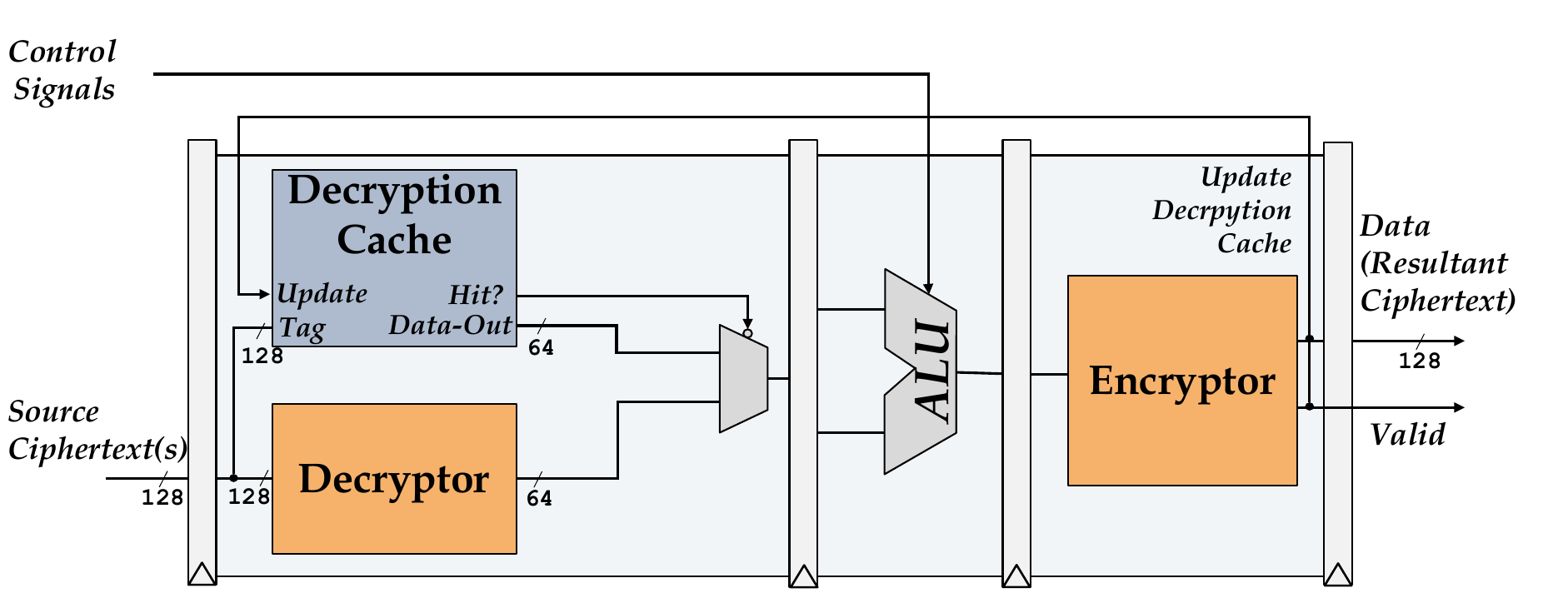}
    \vspace{-2mm}
    \caption{SE Enclave Design with Decryption Cache}
    \vspace{-5mm}
    \label{fig:se_cache}
\end{figure}
\emph{Cache-enabled SE Design (Cache)}:
Another optimization we implemented is to include a decryption cache as shown in Figure~\ref{fig:se_cache}. One observation we have is that many applications demonstrate temporal locality --- a recently computed result is more likely to be used as input operands in subsequent instructions. To exploit this type of locality, we cache the plaintext and the corresponding ciphertext for each result in the decryption cache as a First-in, First-out (FIFO) buffer. For each input operand, SE Enclave first looks up the input in the decryption cache and skips decryption if all operands are hit in the cache. We will show this type of design does not introduce side channels in the sections below.

\subsubsection{Vulnerable Designs}
To validate the effectiveness of our verification technique, we also implemented four microarchitectural vulnerabilities in various parts of the system that can leak information through side channels.

\emph{Exposed Partially Encrypted Ciphertext (Vulnerable Rolled AES)}:
For the design with the rolled-crypto unit, a designer might be tempted to connect the output of the crypto unit directly to the register that holds the partial results of the AES encryption, and thus expose the partial encryption results to an attacker who can snoop that output. 
When not properly encrypted with a sufficient number of rounds, the partially encrypted ciphertext can be easily recovered through crypto-analysis~\cite{reducedaes}. 
This vulnerable design connects the output directly to the crypto result register to emulate a common mistake that might happen in the hardware design process.
Thus, this causes a functional leakage from the SE Enclave.

\emph{Value-dependent Timing Multiplier (Vulnerable Multiplier)}:
The second vulnerability we introduced is a shift-and-add multiplier (sample code listed in Appendix~\ref{appx:example_mul}), which causes a timing leakage. 
When one of the input operands is 0, the ALU in SE Enclave can immediately output 0 as the result of a multiplication operation; otherwise, the shift-and-add multiplication takes multiple cycles (operand dependent) to complete the computation.
\rev{If the attacker measures the execution time of the \texttt{MULT} instruction for different encrypted operands, those finishing in fewest cycles indicate that one of the operands is likely to be 0, resulting in a timing leakage.}
% This would affect the number of cycles the enclave takes to complete the computation.
% By observing the timing, the attacker can conclude whether one of the operands is 0, which is a secret value.

%

\emph{Value-dependent Cache Replacement(Vulnerable Cache)}:
The third vulnerability we introduced is a value-dependent cache replacement policy, which leads to timing leakage. In this design, the cache is partitioned into two smaller caches, as shown in Figure \ref{fig:se_flawed_cache}. 
While still following FIFO, which cache the result would be placed in is dependent on the sign of the plaintext value. An attacker can conduct a known-plaintext attack~\cite{kpa,crypt-dict}, similar to \texttt{Prime-and-Probe}~\cite{prime-and-probe}, to first place a piece of data into one of the cache partitions. 
% \sm{I think we need some citations in this part since it is using known ideas.}
Then, the attacker replays the victim's instruction until the results would fill up one of the cache partitions. 
After that, the attacker probes the cache with the data they previously placed in the cache. 
\rev{By measuring the execution time of instructions to determine whether it is a hit or miss, the attacker can successfully recover the sign bit of victim data and even recover the full secret with enough trials through binary search as detailed in Appendix \ref{appx:attack-example}, which is similar to the approach used in other attacks, such as Blind ROP~\cite{BROP}}.

\begin{figure}[t]
    \centering
    \includegraphics[width=0.48\textwidth,trim={0 5cm 1.4cm 0},clip]{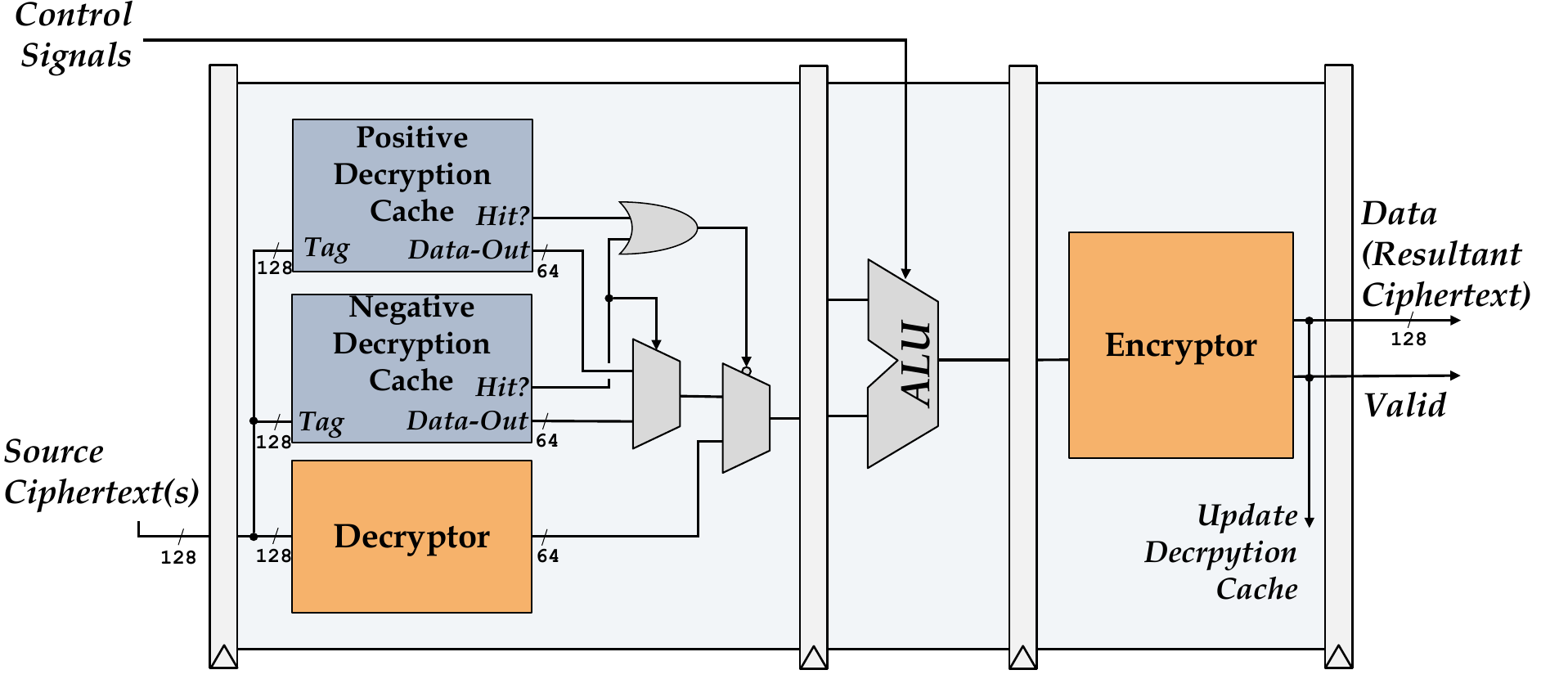}
    \vspace{-5mm}
    \caption{Flawed SE Enclave Design with Partitioned Decryption Cache}
    \vspace{-5mm}
    \label{fig:se_flawed_cache}
\end{figure}

% \begin{figure}[!h]
%     \centering
%     \includegraphics[width=3.2in,clip]{}
%     \caption{Value-dependent Timing RSA}
%     \label{fig:rsa}
% \end{figure}

\emph{Value-dependent Timing RSA (Vulnerable RSA)}:
Our last flawed design is a flawed RSA crypto engine (sample code listed in Appendix~\ref{appx:example_rsa}).
In RSA, to decrypt a piece of data, SE Enclave computes $m\ =\ c^d\ mod\ N$ where $m$ is the decrypted message, $c$ is the ciphertext, $d$ is decryption key, and $N$ is the modulus. 
To compute modular exponentiation effectively, the modular exponentiation module would compute bit by bit for the decryption key until it hits the most significant `1' in the key. 
\rev{This introduces a timing leakage because the decryption time depends on a secret (the private key) and the attacker can measure the execution time of instructions to gain information about the length of the key and potentially recover the whole key~\cite{wong2005timing}. }

\subsection{Evaluation Overview}
\label{sec:eval_overview}
% We use Cadence JasperGold Security Path Verification (SPV) tool~\cite{website:jaspergold, website:hanna2013jasper} to check the existence of information flow.
% \rev{Any IFT tool such as GLIFT\cite{tiwari2009complete}, RTLIFT\cite{ardeshiricham2017register}, CELLIFT\cite{solt2022cellift} can be used in our experiment.}
\subsubsection{Checking RTL Properties Using IFT}
\rev{We check the RTL properties stated in Table~\ref{table:isa-rtl-mapping} using standard IFT. The secret variables (the source in IFT) are \textit{i)} the plaintext after decryption and \textit{ii)} the crypto key in the key register. The destination variables for the IFT are the two outputs of the SE Enclave, \texttt{Valid} and \texttt{Data}.
Information flow to \texttt{Valid} indicates timing leakage as when \texttt{Valid} goes high, \textit{i.e.}, the completion signal, depends on a secret. Information flow to \texttt{Data} indicates dependence of \texttt{Data} on the secret. This dependence may indicate functional leakage, \textit{i.e.}, the secret may be inferred from the value of \texttt{Data} or timing leakage depending on the sequence of values at \texttt{Data}. For example, \texttt{Data} may stay at a value of \texttt{0} till it has a valid value, and switch to this valid value when it is ready. This distinction between functional leakage and timing leakage at \texttt{Data} is made by design analysis or using techniques from previous works~\cite{oberg2014leveraging, ardeshiricham2017clepsydra}. To distinguish this from the timing leakage observed at \texttt{Valid}, we refer to timing leakage at \texttt{Data} as functional-timing leakage.
%If there exists functional leakage, we will capture information flow from secrets to \texttt{Data}, because \texttt{Data} carries the computation results of instructions.
%If there exists timing leakage, we will capture information flow from secrets to the \texttt{Valid}.
%This is because the \texttt{Valid} indicates when the execution of an instruction is finished.
%If the secrets affect the execution time of some instruction, they are actually affecting when the \texttt{Valid} becomes high.
}
\rev{\subsubsection{Hardware Information Flow Tracking Tool} 
We can use any available IFT tool~\cite{website:jaspergold, website:hanna2013jasper, tiwari2009complete, ardeshiricham2017register, solt2022cellift} to check for information flow. In this work, we use Cadence JasperGold Security Path Verification (SPV) tool~\cite{website:jaspergold, website:hanna2013jasper} for its availability in both industry and academia (through a university license).}
%In SPV, the user needs to specify the source signals and sink signals for the information flow.
%\rev{In our case we will set the secrets as source and the Enclave outputs as sink.}
The tool will either prove that there is no information flow from the \rev{sources (the decrypted plaintext and key register) to the destination (the SE Enclave outputs)} or find a hardware trace that demonstrates the information flow using \rev{symbolic model checking, \textit{i.e.}, over all possible inputs}.

\subsubsection{Conditional Ciphertext Declassification}
\rev{In our setting we need to check information flow with ciphertext declassification. SPV could potentially check this using \textit{blackboxing} and the \textit{not-through switch} (in previous SPV versions). SPV also allows specifying predicates on the source and destination under which information flow is allowed. However, in our setting we need to check information flow with {\it conditional ciphertext declassification} using predicates on intermediate signals, rather than with the source/destination (\textit{e.g.}, only declassify the ciphertext when the encryption is finished). 
%This will require constructing complex monitors in SPV.
We accomplish this with a simple modification of the RTL design as described in \S~\ref{sec:definition}.}
Note that we need to treat unrolled encryption and rolled encryption differently.
For unrolled encryption, because of its pipeline structure, we can directly replace the ciphertext output of the last crypto unit in the pipeline with a free input $c_f$ because the output of the last crypto unit is always completely encrypted which means the completion predicate $p$ is always true.
For rolled encryption, since there is only one crypto unit, we only replace its output with a free input when the counter indicates that it is the last round.
Thus, for a 10-round encryption, the completion predicate $p$ is $Counter==10$. 

\subsubsection{Cache Initialization}
% When evaluating SE variants with caching, instead of initializing the cache to be empty, we choose to initialize it to be in an arbitrary state, \textit{i.e.}, any cache line can be valid or invalid.
\rev{When evaluating SE variants with caching, we initialize the cache to be in an arbitrary state, \textit{i.e.}, any cache line can be valid or invalid.}
\rev{Since JasperGold conducts symbolic model checking, all possible initial states will be explored, thus ensuring full coverage.}
This helps catch vulnerabilities in a much shorter time because some vulnerabilities only leak information when the cache is full.
If we initialize the cache to be empty, the formal tool needs to conduct an extremely long symbolic execution, which significantly increases the verification time.
If we initialize the cache to be full, although we may catch the vulnerability leaking information with the full cache faster, the tool cannot cover the entire search space because there is no instruction to flush the cache in the SE Enclave.
However, we will fail to capture vulnerabilities that only leak information when the cache is not full.
Therefore, initializing the cache to be in an arbitrary state can avoid the above two drawbacks.
As the experiments demonstrate, the formal tool can prove security with full coverage, and also capture vulnerabilities efficiently.

% we may encounter the scalability issue of formal verification because the cache drastically increases the state space of the design.
% If we start the verification from an initial state where cache is empty which is the default reset state, it may take an excessively long time to catch the potential vulnerability as some vulnerability only leaks information when cache is full.
% Thus, we decide to verify SE variants in a 2-round manner: in the first round, we initialize the cache to be full (the content of cache lines is arbitrary) before the formal verification.
% In this round we are exploring a subset of the whole state space, and thus it takes less time to finish.
% If we catch the information leakage with this setup, we can make the decision that this implementation is insecure and skip the second round.
% If the formal tool proves the security in the first round, we then do a second round verification where the cache is initialized to be empty.
% The implementation is decided as secure only if it passes the second round.
% This is because we cannot decide that the implementation is secure until we explore the whole state space.

In the following subsection, we will evaluate implementations of the SE Enclave and its variants, along with several buggy implementations using the JasperGold SPV tool.

\begin{table}[ht]
\vspace{-2mm}
\centering
\resizebox{1\columnwidth}{!} {
\begin{tabular}{|c|c|c|c|c|r|}
\hline
\multirow{3}{*}{Setup} &
\multicolumn{5}{|c|}{Processor: two Intel Xeon 5222 cores} \\ 
& \multicolumn{5}{|c|}{Memory: 256 GB} \\
& \multicolumn{5}{|c|}{Tool: Cadence JasperGold 2021} \\
\hline
\hline
\rowcolor[HTML]{f3f3f3}\textbf{\ \ \ \ \ SE Variant\ \ \ \ \ } & \textbf{register bits} & \textbf{result} & \textbf{leakage} &  \textbf{time} & \rev{\textbf{memory}} \\
\hline
Default &6544 & secure & - & 0.1s & \rev{1.6GB}\\
\hline
Rolled AES&1412 &secure &- & 0.1s & \rev{0.7GB}\\
\hline 
Cache&12784&secure &- & 0.1s & \rev{1.6GB}\\
\hline 
Vulnerable &\multirow{2}{*}{1412} &\multirow{2}{*}{insecure} & functional & \multirow{2}{*}{109.4s} &\multirow{2}{*}{\rev{2.5GB}} \\
Rolled AES & & & \rev{(plaintext $\rightarrow$ 
\texttt{Data}, key $\rightarrow$ \texttt{Data})} & &\\
 \hline
Vulnerable &\multirow{2}{*}{6737} &\multirow{2}{*}{insecure} & timing & \multirow{2}{*}{63.3s} &\multirow{2}{*}{\rev{4.7GB}} \\
 Multiplier & & & \rev{(plaintext $\rightarrow$ \texttt{Valid })}& &\\
\hline 
Vulnerable & \multirow{2}{*}{12752} &\multirow{2}{*}{insecure} & timing & \multirow{2}{*}{402.4s} &\multirow{2}{*}{\rev{14.7GB}} \\
 Cache & & & \rev{(plaintext $\rightarrow$ \texttt{Valid })}& &\\
\hline 
%Vulnerable &\multirow{2}{*}{102 (4328)} &\multirow{2}{*}{insecure} & \multirow{2}{*}{functional}  & \multirow{2}{*}{0.5s} \\
Vulnerable &\multirow{2}{*}{4328} &\multirow{2}{*}{insecure} & \rev{timing, functional-timing}  & \multirow{2}{*}{0.1s} & \multirow{2}{*}{\rev{0.3GB}}\\
 RSA & & & \rev{(key $\rightarrow$ \texttt{Valid}, key $\rightarrow$ \texttt{Data} )} & &\\
\hline 
\end{tabular}}
\vspace{-0.5mm}
\caption{Experimental Evaluation}
\vspace{-9mm}
\label{table:exp_result}
\end{table}

\subsection{Experimental Results}
The experimental results are provided in Table~\ref{table:exp_result}.
For every SE design variant, we show the number of register bits as an indicator of the size of the state space.
\rev{In addition to the verification result, \textit{i.e.}, \emph{secure} or \emph{insecure}, we also label every insecure design with the leakage type, along with the source and sink of the information flow captured.
The \texttt{Vulnerable Rolled AES} implementation has functional leakage where both plaintext secrets and the encryption key are leaked to \texttt{Data}.
The \texttt{Vulnerable Multiplier} and \texttt{Vulnerable Cache} have timing leakage where the plaintext secrets are leaked to \texttt{Valid}.
For \texttt{Vulnerable RSA}, there exists information flow to both \texttt{Data} and \texttt{Valid}.
The leakage through \texttt{Valid} is timing leakage, but there is also timing leakage through \texttt{Data} as it is also leaking information about execution time as follows. 
The RSA crypto engine can only be implemented as a non-pipeline structure (the number of rounds is variable and depends on the decryption key).
Thus, similar to rolled AES encryption, its output needs to be blocked when the encryption is ongoing.
This means that when the ciphertext is not ready, it will output some default value such as all \texttt{0} to the \texttt{Data}.
The attacker can infer the decryption key by measuring the number of \texttt{0}s between two non-zero ciphertext outputs which is timing leakage. As discussed in \S~\ref{sec:eval_overview}, we distinguish between the timing leakage to \texttt{Valid} and \texttt{Data}  by referring to the latter as functional-timing leakage.

In a pipelined AES encryption in the default SE Enclave design, the crypto engine outputs a fully encrypted ciphertext every cycle, thus only \texttt{Valid} carries timing information and this attack does not work.
This explains why there is no information flow to \texttt{Data} for \texttt{Vulnerable Multiplier} and \texttt{Vulnerable Cache}.
}

\rev{ As shown in Table~\ref{table:exp_result}, it takes less time and memory to prove no information leakage for secure implementations than to detect information leakage for the vulnerable ones.
In secure designs, since the path from secrets to both outputs is cut off after declassification, SPV only needs to do a simple structural path check to prove the security. In comparison, SPV needs to do a state space search in order to detect information leakage in a vulnerable design.
The time and memory usage are also affected by the number of register bits and the complexity of the design.}
Across all implementations, the maximum verification time is less than 7 minutes.
The experimental results demonstrate that our evaluation scheme is able to prove security or catch leakage precisely and efficiently.
\vspace{-1mm}
\subsection{Summary}
In this section, we evaluated different SE Enclave variants using information flow checking.
In general, our information flow definition and evaluation methodology work on any low-trust architecture with encryption.
This is because \rev{low-trust architectures limit trust to only a small hardware enclave which facilitates formal verification, and the conditional ciphertext declassification we implemented can correctly deal with the information flow going through encryption.
}
\section{Related Work} \label{sec:related}
\subsection{Secure Hardware Architectures}
There have been numerous works in designing secure architectures as microarchitectural flaws (and lack of security awareness at the architecture level) continue to be exploited by software attacks.

Trusted Execution Environments (TEEs)~\cite{champagne2010scalable, suh2003aegis, suh2005design, alves2004trustzone, costan2016sanctum, bourgeat2019mi6, fletcher2012secure, ren2017design, anati2013innovative, costan2016intel, mckeen2013innovative, schunter2016intel, lee2020keystone} have been widely deployed by mainstream hardware vendors in their server-grade CPUs to provide execution integrity and data confidentiality. 
\rev{Although the root of trust is also in hardware, TEEs such as Intel SGX~\cite{costan2016intel} and Keystone~\cite{lee2020keystone} have a significantly different threat model as they do not eliminate timing side channels.}
Sanctum~\cite{costan2016sanctum}, MI6~\cite{bourgeat2019mi6}, and Ascend~\cite{fletcher2012secure, ren2017design} aim to eliminate timing side channels, but they have much larger trusted computing bases than SE, making them hard to be formally verified.

Multiple works have been developed to mitigate leakage of data through timing side channels during speculative execution. 
% timing side channels caused by speculative execution. \lmb{changed to be more technically accurate!}
Hyperflow~\cite{ferraiuolo_hyperflow_2018} enforces data security properties by static flow analysis at the hardware construction time. It relies on a trusted label manager to assign correct labels to data.  OISA~\cite{yu_data_2019} presents a timing-channel free ISA extension that uses tags to distinguish \textit{Public} and \textit{Confidential} data. It verifies both ISA and microarchitectural implementation through an abstract machine. Speculative Taint Tracking (STT)~\cite{stt} and Speculative Data-Oblivious Execution~(SDO)~\cite{sdo} are consecutive works that use runtime speculative taint analysis to eliminate timing side channels for safe speculative execution by delaying transient execution on the backend. DOLMA~\cite{dolma} introduces the principle of transient non-observability and delays executions that are reliant on speculative memory micro-ops. Software-hardware contract~\cite{sw-hw-contract} formalizes software security requirements and hardware capabilities to ensure confidentiality for benign programs. 

In comparison, we present the ISA and RTL level verification of a low-trust architecture that cannot leak secrets for any (even malicious) program, which is absent in the aforementioned works.

\rev{\subsection{Functional Leakage and Timing Leakage}

In this work, we use the role of signals to differentiate between functional leakage and timing leakage, \textit{i.e.}, using hardware information flow tracking at \texttt{Data} for functional leakage and at \texttt{Valid} for timing leakage.
This idea was discussed in a previous work~\cite{hu2021hardware}.

Aside from using the role of signals, previous works also tried to use the observation of a signal in consecutive cycles (sequence) to differentiate between functional and timing leakage~\cite{oberg2014leveraging, ardeshiricham2017clepsydra}.
Such techniques require more complex hardware information flow tracking logic and can separate timing leakage and functional leakage without needing to know the role of signals.
However, since our work is interested in capturing both types of leakage, we choose to use the simpler role-based classification.}

\subsection{Hardware Security Evaluation Schemes}

\subsubsection{Type-Based Hardware Security Evaluation}
SecVerilog~\cite{zhang2015hardware} and its variants~\cite{ferraiuolo2017verification, ferraiuolo2017secure} introduce new hardware design languages that allow developers to attach different security levels to hardware variables while programming, and also to define the rules for information flowing between different security levels.
To declassify, these languages provide 'downgrading' syntax in the language to permit specific information flows.
Static type checking is conducted to formally verify that the security policy is followed by the design.

\rev{While it is computationally faster, verification based on static type checking is not as precise as verification based on symbolic checking as in our work~\cite{khoo2010mixing}.}
Static type checking may easily produce false positives due to its conservative nature.
Further, it is inconvenient and error-prone for users to use a new language along with security labels in developing hardware.
% \vspace{-1mm}
\subsubsection{Other Evaluation Schemes Based on Information Flow}
Depending on threat models, different security properties have been proposed and verified.
These properties are based on the basic information flow property and add different conditions and constraints according to the threat model.
% Attempts have been made to detect timing side channels~\cite{gleissenthall2019iodine, oberg2014leveraging}.
% Specifically, they usually utilize taint propagation to track the information flow, and the information flow meeting certain requirements are treated as evidence for timing side channels.
Attempts have been made to detect timing leakage with taint propagation~\cite{gleissenthall2019iodine, oberg2014leveraging} by verifying properties such as `constant-time execution'. Works based on Unique Program Execution Checking~\cite{fadiheh2020formal, fadiheh2019processor} are aimed at checking information leakage through out-of-order execution, where information flow caused by in-order execution is ruled out by constraints.
% A set of constraints and assumptions are applied to guarantee that only the information leakage through out-of-order execution will be captured.
% This has some similarities to how declassification is handled in our work, but the setting is very different.

A key advantage of our work is that we are able to verify security properties at the instruction level for all programs while other works do not provide the same guarantees.
% the hardware security properties we verify are to check the instruction-level requirements of SE while the properties verified in other works are not related to a specific ISA.
 % Primary Lauren Secondary Todd

\section{Conclusions}\label{sec:conclusions}
%In this work, we propose an efficient security verification scheme for low-trust architectures that includes both ISA and program level proofs as well as formal RTL implementation verification.

This work demonstrates how the security of all programs running on low-trust architectures can be ensured by clearly defining the security requirements of their ISA instructions, formally specifying the consequent proof obligations for RTL implementations, and then executing these proof obligations using RTL formal verification tools. Further, it shows how these proof obligations cover functional as well as timing side-channel leakage. Finally, the small footprint of the trusted part of the implementation enables completing the formal checks using existing state-of-the-art formal verification tools - something that is currently not possible for non-low-trust architectures. We demonstrate our approach using the SE architecture where the ISA and the program-level proof are handwritten and the RTL-level verification done using off-the-shelf formal tools.
Our experiments using seven different design variants shows that our approach is effective in proving the security of correct implementations and detecting flaws in buggy implementations. 
}

\begin{acks}
We thank Professor Paul Grubbs from the University of Michigan for lending his expertise in cryptography.
\end{acks}

\bibliographystyle{ACM-Reference-Format}
% \balance
\bibliography{refs}

\clearpage
\appendix
\section{Formal Statement of Cryptographic Primitives}
\label{appx:crypto-def}
Will start by formally defining the notion of security against chosen ciphertext attacks. For a give encryption scheme $\pi$ we define the following experiment $Exp_{\mathcal{A}, \pi}^{CCA}(n+s)$ parameterized by the security parameters $n,s$ and  an arbitrary $(n+s)$-polynomially bounded attacker $\mathcal{A}$:

\begin{enumerate}
    \item $\mathcal{A}$ is given oracle access to $enc_k(\cdot)$ and $dec_k(\cdot)$. It eventually outputs two messages $m_0, m_1$ of size $n$.
    \item A uniform bit $b\in \{0, 1\}$ is selected and the challenge ciphertext $c \leftarrow enc_k(m_b)$ is computed and given to $\mathcal{A}$.
    \item $\mathcal{A}$ continues to have oracle access to $enc_k(\cdot)$ and $dec_k(\cdot)$ except when $dec_k(c)$ is queried the oracle returns $\bot$. Eventually, $\mathcal{A}$ outputs a bit $b'$.
    \item The result of the experiment is 1 when $b = b'$ and 0 otherwise. 
\end{enumerate}
We can state the security requirement of the SE system formally as follows,

\begin{definition}[CCA Security]\label{def:cca-security}
    An encryption scheme $\pi$ is CCA-secure in the following sense
    \begin{align*}
        Pr[Exp_{\mathcal{A}, \pi}^{CCA}(n+s) = 1] \leq \frac{1}{2} + negl(n+s)
    \end{align*}
    Where $negl$ is a function from the natural numbers to the non-negative real numbers where for every polynomial $p$ there is an $X$ such that for all $x > X$ it holds that $negl(x) < \frac{1}{p(x)}$.
\end{definition}

\begin{theorem}[SE CCA Security]\label{thm:se-security}
   The $SE$ encryption scheme defined in \S\ref{sub:formalization-isa} is CCA-secure. 
\end{theorem}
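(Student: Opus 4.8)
The plan is a standard game-hopping argument in two movements: first replace the keyed strong pseudorandom permutation by a truly random permutation, paying only the (negligible) strong-PRP distinguishing advantage; then argue information-theoretically that, against a random permutation, the ``pad with a fresh salt and apply the permutation'' construction hides the challenge bit up to a salt-collision term of order $q/2^s$, where $q$ is the (polynomial) number of oracle queries made by the attacker $\mathcal{A}$. Correctness, $decrypt(k,encrypt(k,m))=m$, is immediate from $P_k^{-1}(P_k(m\|u))[0:n]=m$, so the substance is the security bound of Definition~\ref{def:cca-security}.

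Concretely, I would set up the following sequence of experiments, all parameterized by the security parameter $n+s$ and an $(n+s)$-polynomially bounded $\mathcal{A}$. Game~$0$ is $Exp_{\mathcal{A},\mathcal{SE}}^{CCA}(n+s)$ itself. Game~$1$ is identical except that the calls to $P_k$ and $P_k^{-1}$ used to answer encryption and decryption queries (and to form the challenge) are answered by a uniformly random permutation $\Pi$ on $\{0,1\}^{s+n}$ and its inverse. These two games are indistinguishable up to a negligible amount: from a distinguisher between Game~$0$ and Game~$1$ one builds a strong-PRP distinguisher $\mathcal{D}$ that uses its forward oracle to emulate $encrypt$ (sampling the salt itself) and its inverse oracle to emulate $decrypt$, thereby perfectly simulating whichever game its own oracle corresponds to; since $\mathcal{D}$ is polynomial, its advantage is $negl(n+s)$ by the definition of a strong pseudorandom permutation, so $|\Pr[\text{Game }0=1]-\Pr[\text{Game }1=1]|\le negl(n+s)$.

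It then remains to bound $\Pr[\text{Game }1=1]$. Here I would present $\Pi$ via lazy sampling (a partial injective table, extended by a fresh uniform unused value on each new forward or inverse query) and introduce two ``bad'' events: $B_1$, that the challenge preimage $m_b\|u^*$ already appears in the domain of the table at the moment the challenge is formed (because some earlier encryption query used message $m_b$ with salt $u^*$, or because some earlier decryption query's freshly sampled preimage happened to equal $m_b\|u^*$); and $B_2$, that some post-challenge encryption query samples a salt $u$ with $m\|u=m_b\|u^*$. Since $u^*$ is a uniform $s$-bit string sampled independently of everything in $\mathcal{A}$'s view up to the challenge, and each post-challenge salt is likewise uniform and independent, a union bound gives $\Pr[B_1\vee B_2]\le 2q/2^s$, which is negligible in $n+s$ because $q$ is polynomial. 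Conditioned on $\neg(B_1\vee B_2)$, the value returned for the challenge is a fresh uniform unused codomain element whose distribution does not depend on $b$; every post-challenge encryption query returns a fresh uniform unused codomain element distinct from $c^*$ (by $\neg B_2$); every decryption query on $c'\neq c^*$ returns a value unrelated to $\Pi^{-1}(c^*)=m_b\|u^*$; and the query $dec(c^*)$ is disallowed and answered by $\bot$. Hence the entire view of $\mathcal{A}$ is independent of $b$ in this conditioning, so $\Pr[b'=b\mid\neg(B_1\vee B_2)]=\tfrac12$, giving $\Pr[\text{Game }1=1]\le\tfrac12+2q/2^s$. Combining with the previous bound yields $\Pr[Exp_{\mathcal{A},\mathcal{SE}}^{CCA}(n+s)=1]\le\tfrac12+negl(n+s)$, which is exactly CCA security.

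The step I expect to be the main obstacle is the conditioning argument in the random-permutation world: one must argue carefully that the adversary's adaptive post-challenge behavior — in particular repeated decryption queries on ciphertexts obtained by mauling $c^*$ — genuinely leaks nothing about $m_b$, which is precisely where the permutation structure (as opposed to a stream-cipher-style, malleable construction) is essential, and one must check that the lazy-sampling reformulation correctly handles both the forbidden $dec(c^*)$ query and collisions between inverse-query outputs and the challenge preimage. A secondary point needing care is that the strong-PRP reduction emulates \emph{both} oracles faithfully, with fresh independent salts, so that it reproduces Game~$0$ or Game~$1$ exactly. Everything else follows the textbook treatment cited in \S\ref{sub:formalization-isa}.
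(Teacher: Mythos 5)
Your proposal is correct and follows essentially the same route as the paper: the paper's proof is a one-line deferral to ``a straightforward reduction to the strong-PRP property, with fresh $u$ per query, as in \cite[Ch.~9.4]{joyofcryptography},'' and your game-hopping argument (swap $P_k$ for a random permutation via the strong-PRP reduction, then a lazy-sampling/bad-event analysis with a $q/2^s$ salt-collision term) is precisely that textbook argument written out, including correctly flagging the delicate points about mauled-challenge decryption queries and inverse-query collisions with the challenge preimage. No substantive divergence from the paper's intended proof.
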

\begin{proof}
    The proof is a straightforward reduction to the properties of the strong pseudorandom permutation used by the scheme as long as $u$ is parameterized by $s$ and freshly generated for each query. The detailed proof is presented in introductory textbooks, such as \cite[Ch. 9.4]{joyofcryptography}.
\end{proof}

\section{Full Proof of Soundness}
\label{appx:proofs}
\subsection{Proof of Lemma \ref{lma:single-command}}
\label{appx:proofs-lemma}
\begin{proof}
The proof follows by structural induction on the derivation $\langle p, \sigma_1 \rangle \xrightarrow{t(\sigma)} \langle \mathsf{p_1'}, \sigma_1' \rangle$. First, note that the programs $p_1'$ and $p_2'$ never diverge in this semantics, which isn't true for small-step semantics generally. For the rules \verb|CMOV-T|, \verb|CMOV-F|, \verb|BOP|, and \verb|ENC| all step into the same configuration of $\langle \mathsf{skip}, \sigma \rangle$ thus if $p$ is one of these, we will always have $p_1' = p_2' = \mathsf{skip}$. In the case of the base case \verb|SEQ| rule, the step ends with the configuration $\langle q, \sigma \rangle$ in which case $p_1' = p_2' = q$ since $p = \mathsf{skip; q}$. In the main \verb|SEQ| case, we end with the configuration $\langle \mathsf{skip; p} \rangle$

Now suppose, 

\begin{enumerate}
\item The small-step taken is $\mathsf{CMOV-T}$

      Under $\sigma_1$
     \begin{mathpar}
        \inferrule[]{
            \snr{r_1}{\sigma_1}{[c_1]}\\
            \snr{r_3}{\sigma_1}{[c_3]}\\
            \snr{keyReg}{\sigma_1}{k}\\
            decrypt(c_1, k) = b\\
            decrypt(c_3, k) = m\\
            u \sim uniform(s) \\
            encrypt(m||u, k) = [c_5]\\
        }{\sn{\mathsf{if}~r_1: r_2 \leftarrow r_3~\mathsf{else}~r_2 \leftarrow r_4}{~\sigma_1}{skip}{\sigma_1[[c_5]/r_2]}{2}} 
    \end{mathpar}

 and under $\sigma_2$

\begin{mathpar}
      \inferrule[]{
          \snr{r_1}{\sigma_2}{[c_1']}\\
          \snr{r_3}{\sigma_2}{[c_3']}\\
          \snr{keyReg}{\sigma_2}{k'}\\
          decrypt(c_1, k') = b'\\
          decrypt(c_3, k') = m'\\
          u' \sim uniform(s) \\
          encrypt(m'||u', k') = [c_5']\\
      }{\sn{\mathsf{if}~r_1: r_2 \leftarrow r_3~\mathsf{else}~r_2 \leftarrow r_4}{~\sigma_2}{skip}{\sigma_2[[c_5']/r_2]}{2}} 
  \end{mathpar}

  and the typing rule CMOV is applied

  \begin{mathpar}
    \inferrule[]{
        \tp {r_1} \pu \\
        \tp {r_2} \pu \\
        \tp {r_3} \pu \\
        \tp {r_4} \pu 
      }{\tp  {if~r_1:r_2 \leftarrow r_3~else~r_2 \leftarrow r_4} \pud}
  \end{mathpar}

Since $\sigma_1 \approx_l \sigma_2$ and $\tp {r_1} \pu$ and $\tp {r_3} \pu$ by the definition of $\G$, we have $[c_1] = [c_1']$ and $[c_3] = [c_3']$. Since, $\tp k \pr$, we then could consider two cases both of which resolve in the same way ultimately:

\begin{itemize}
    \item \textit{$k = k'$}. Then we get $b = b'$ and $m = m'$ since the decryption equation holds for $decrypt$.
    \item \textit{$k \neq k'$}. Then $b \neq b'$ and $m \neq m'$. But since $\tp b \pr$ and $\tp m \pr$ and same for $b', m'$ and there is now flow to a public register, this is irrelevant for the low-equivalence of the states.
\end{itemize}

In either case, we have $\tp {[c_5]} \pu$ and $\tp{[c_5']} \pu$ but then $[c_5] \approx [c_5']$ by Definition \ref{def:equiv}. Since in both cases only $r_2$ is updated by an equivalent value and $\sigma_1 \approx_l \sigma_2$, we can conclude $\G \vdash \sigma_1' \approx_l \sigma_2'$.

\item The rules \verb|CMOV-F|, \verb|BOP|, \verb|ENC| follow the same argument as \verb|CMOV-T|.

\item Suppose the step applied is the base \verb|SEQ| rule. 

Under $\sigma_1$:
\begin{mathpar}
    \inferrule[seq]{\ }{\sn{\mathsf{skip; q}}{\sigma_1}{q}{\sigma_1}{1}}
\end{mathpar}

and under $\sigma_2$:
\begin{mathpar}
    \inferrule[seq]{\ }{\sn{\mathsf{skip; q}}{\sigma_2}{q}{\sigma_2}{1}}
\end{mathpar}

And the typing rule \verb|SEQ| is applied
\begin{mathpar}
  \inferrule[seq]{
    \tp {p_1} \ell'~prog \\
     \tp {p_2} \ell''~prog\\
     \ell = \ell' \sqcup \ell''
  }{\tp {p_1; p_2} \ell~prog}
\end{mathpar}

Since $\G \vdash \sigma_1 \approx_l \sigma_2$ and $\sigma_1 = \sigma_1' \land \sigma_2 = \sigma_2'$, we have $\G \vdash \sigma_1' \approx_l \sigma_2'$ trivially.

\item Finally, suppose the step taken is the second \verb|SEQ| rule

Under $\sigma_1$:
\begin{mathpar}
   \inferrule[]{\sn{\mathsf{c}}{\sigma_1}{\mathsf{skip}}{\sigma_1'}{1}}{\sn{\mathsf{c;p}}{\sigma_1}{skip;p}{\sigma_1'}{1}}
\end{mathpar}

and under $\sigma_2$:
\begin{mathpar}
   \inferrule[]{\sn{\mathsf{c}}{\sigma_2}{\mathsf{skip}}{\sigma_2'}{1}}{\sn{\mathsf{c;p}}{\sigma_2}{skip;p}{\sigma_2'}{1}}
\end{mathpar}

And the typing rule $SEQ$ is applied:

\begin{mathpar}
  \inferrule[seq]{
    \tp {p_1} \ell'~prog \\
     \tp {p_2} \ell''~prog\\
     \ell = \ell' \sqcup \ell''
  }{\tp {p_1; p_2} \ell~prog}
\end{mathpar}

By the inductive hypothesis on $\sn{\mathsf{c}}{\sigma_1}{\mathsf{skip}}{\sigma_1'}{1}$, we already get $\sigma'_1 \approx_l \sigma_2'$.

\end{enumerate}

The timing property $t(\sigma_1) = t(\sigma_2)$ is immediate from assumption (2) and the restriction on $t$ by Eq. \ref{eq:time-invariance}.
\end{proof}

\subsection{Proof of Theorem \ref{thm:soundness}}
\label{appx:proofs-thm}
We will give the usual definition of $\rightarrow_*$ adjusted to account for the timing property. 

\begin{definition}[Multi-step]
The multi-step function $\xrightarrow{n}_*$ is the transitive and reflexive closure of $\xrightarrow{t(\sigma)}$. Inductively,
\begin{mathpar}
    \inferrule[]{\ }{\langle p, \sigma \rangle \xrightarrow{0}_* \langle p, \sigma \rangle}\and
    \inferrule[]{\sn{p}{\sigma}{p'}{\sigma'}{1}\\
                \langle p', \sigma \rangle \xrightarrow{n}_* \langle p'', \sigma'' \rangle}{\langle p, \sigma \rangle \xrightarrow{t(\sigma) + n}_* \langle p'', \sigma'' \rangle}\and
\end{mathpar}
\end{definition}

Now we can prove the theorem as follows. 
We will first state a preservation property here:

\begin{lemma}[Preservation]
\label{lma:preservation}
  If 
\begin{enumerate}
  \item $p$ be an SE program
  \item $\tp p \pu~prog$
  \item $\sigma$ be an initial state 
  \item $\sn{p}{\sigma}{p'}{\sigma'}{}$
\end{enumerate}

Then we have that $p'$ is also well-typed: $\tp {p'} \pu~prog$
\end{lemma}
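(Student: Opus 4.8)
The plan is to prove the Preservation Lemma by structural induction on the derivation of the small-step $\sn{p}{\sigma}{p'}{\sigma'}{}$, mirroring the case analysis already used in the proof of Lemma~\ref{lma:single-command}. The statement is really only interesting for the inductive (sequencing) rules, since all the ``leaf'' instructions step to a configuration whose program component is \verb|skip|, and \verb|skip| is typed $\pud$ by the \verb|SKIP| rule regardless of anything else.

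First I would dispatch the leaf cases. If the step taken is \verb|ENC|, \verb|BOP|, \verb|CMOV-T|, or \verb|CMOV-F|, then $p' = \mathsf{skip}$, and $\tp{\mathsf{skip}}{\pud}$ holds immediately by rule \verb|SKIP|. No use of the hypothesis $\tp p \pu~prog$ is needed here. Next I would handle the base \verb|SEQ| rule: here $p = \mathsf{skip; q}$ and $p' = q$. By inversion on the typing derivation $\tp{\mathsf{skip; q}}{\pu~prog}$, the only applicable rule is \verb|SEQ|, so we must have $\tp{\mathsf{skip}}{\ell'~prog}$, $\tp{q}{\ell''~prog}$, and $\pu = \ell' \sqcup \ell''$. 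Since $\pu$ is the bottom of the lattice, $\ell' \sqcup \ell'' = \pu$ forces $\ell' = \ell'' = \pu$, hence $\tp{q}{\pu~prog}$, which is exactly what we need.

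Finally, the inductive \verb|SEQ| case: $p = \mathsf{c; p_2}$ steps to $p' = \mathsf{skip; p_2}$ via a sub-derivation $\sn{\mathsf{c}}{\sigma}{\mathsf{skip}}{\sigma'}{}$. Inversion on $\tp{\mathsf{c; p_2}}{\pu~prog}$ via rule \verb|SEQ| again gives $\tp{\mathsf{c}}{\pu~prog}$ and $\tp{p_2}{\pu~prog}$ (using the same bottom-of-lattice argument to pin both components to $\pu$). We then need $\tp{\mathsf{skip; p_2}}{\pu~prog}$, which follows from rule \verb|SEQ| applied to $\tp{\mathsf{skip}}{\pud}$ (rule \verb|SKIP|) and $\tp{p_2}{\pu~prog}$, with $\pu \sqcup \pu = \pu$. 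Note the inductive hypothesis on the sub-derivation is not even strictly needed, since \verb|c| steps to \verb|skip| which is trivially well-typed; the structure of the proof is driven entirely by inversion on the typing judgment.

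The only mild subtlety — and the step I would be most careful about — is the inversion argument on the typing derivation. Since the typing rules for commands (\verb|ENC|, \verb|BOP|, \verb|CMOV|, \verb|SKIP|) all conclude $\pud$ while \verb|SEQ| concludes a general $\ell~prog$, one must observe that $\pud$ is a shorthand for $\pu~prog$ in this grammar, so the hypothesis $\tp p \pu~prog$ is consistent with any of these rules; the lattice fact that $\ell' \sqcup \ell'' = \pu$ implies $\ell' = \ell'' = \pu$ is what makes the induction go through cleanly. Once that is in place, every case is a one- or two-line application of \verb|SKIP| and \verb|SEQ|, so I expect no real obstacle; the work is essentially bookkeeping over the rule set of Figure~\ref{fig:type-system}.
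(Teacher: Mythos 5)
Your proposal is correct and follows essentially the same route as the paper, which simply states that the result ``follows by induction on the derivation'' without spelling out the cases; your case analysis (leaf commands stepping to \verb|skip|, plus inversion on the \verb|SEQ| typing rule with the bottom-of-lattice observation) is exactly the bookkeeping that one-line proof elides. Your side remark that the inductive hypothesis is never really needed --- because every command steps to \verb|skip| in one step, so the argument is really just a case analysis --- is also accurate.
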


\begin{proof}
The proof follows by induction on the derivation of $p'$.  
  
\end{proof}

\begin{proof}
The proof follows by induction on the number of steps taken by $\rightarrow^*$. The base case of a single step is immediate from Lemma \ref{lma:single-command}. Now assume it holds for $k$ steps, then 
\begin{enumerate}
    \item $\tp {p_k} \ell$ for the program after $k$ steps under $\sigma_1$
    \item $\tp {p'_k} \ell$ for the program after $k$ steps under $\sigma_2$ 
    \item $\G \vdash \sigma_k \approx_l \sigma'_k$ the states after $k$ steps
\end{enumerate}

By our argument in Appendix \ref{appx:proofs-lemma}, we know $\mathsf{p_k} = \mathsf{p'_k}$ since we don't have diverging programs. Then we also have,

\begin{enumerate}
    \item $\sn{p_k}{\sigma_k}{q}{\sigma'}{1}$
    \item $\sn{p'_k}{\sigma'_k}{q}{\sigma''}{1}$
\end{enumerate}

This is exactly what is needed by Lemma \ref{lma:single-command} to take another step given the type preservation by Lemma \ref{lma:preservation}. Thus we can conclude $\sigma_1' \approx_l \sigma_2'$.

The timing property follows directly from Lemma \ref{lma:single-command} for the base case as well. For the inductive case, note that $\mathsf{p_k} = \mathsf{p'_k}$ and that $\sigma_k \approx_l \sigma'_k$ at each step. Thus at each step $t(\sigma_k) = t(\sigma'_k)$ by Eq. \ref{eq:time-invariance}. Since $\rightarrow_*$ simply sums the trace of natural numbers generated by $t$, we have that $n = m$.
\end{proof}

\lstset{
 columns=fixed,       
 numbers=left,                                
 numberstyle=\tiny\color{gray},                    
 frame=none,                                    
 backgroundcolor=\color[RGB]{245,245,244},            
 keywordstyle=\color[RGB]{40,40,255},              
 numberstyle=\footnotesize\color{darkgray},           
 commentstyle=\it\color[RGB]{0,96,96},               
 stringstyle=\rmfamily\slshape\color[RGB]{128,0,0},  
 showstringspaces=false,                              
 language=verilog,                                       
}

\section{A Binary Search Attack on a Ciphertext with Reused Entropy Bits}
\label{appx:attack-example}
\begin{lstlisting}[caption={A simple bruteforce exploit for an operator that does not refresh its entropy.}]
enc_bool true_predicate = (enc_int)1 < 2;
enc_bool false_predicate = (enc_int)2 < 1;
int guess = MAX_INT / |2|;
for (int i=30; i>0; i--) {
    enc_bool cond = (guess < secret);
    //If ciphertexts are not equal
    if ((int)cond != (int)true_predicate) {
        guess -= (1 << i);
    }
    //Else if ciphertexts are equal
    else {
        guess += (1 << i);
    }
}
\end{lstlisting}

The code snippet demonstrates how a poorly designed \verb|lt| ($<$) operator for ciphertexts can lead to a simple brute force exploit that extracts the secret value of the ciphertext without using the key. Specifically, the operator here does not refresh the entropy bits used during the encryption process. The attacker first generates a secure ciphertext representing true and false. Then it simply starts with the middle representable integer as a guess and performs a binary search over all representable integers in the system each time comparing with the true and false ciphertexts generated in the first step. For a 32-bit secret integer, the attacker can extract the secret value in 31 iterations.

\section{Example Code for Vulnerable Designs}
\subsection{Value-dependent Timing Multiplier}
\label{appx:example_mul}
\begin{lstlisting}
input clk;
reg [63:0] reg_a, reg_b;  //two operands
reg [127:0] o;            //product
reg [5:0] counter;        //shift counts
reg finish; 

initial begin
    o = 0;
    finish = 0;
    counter = 0;
end

always @(posedge clk) begin
    if (reg_a == 0 || reg_b == 0) begin
        finish <= 1;
    end
    else begin             //shift and add
        if (reg_b[0] == 1)
            o <= o + reg_a << counter;
        reg_b <= reg_b >> 1;
        counter <= counter + 1;
    end
end

\end{lstlisting}

\subsection{Value-dependent Timing RSA}
\label{appx:example_rsa}
\begin{lstlisting}
input clk;
reg finish;

reg [127:0] o, o_next, d_leftover, n;
initial begin
    o_next = c; // c is the ciphertext
    o = 1;
    finish = 0;
    d_leftover = d; // d is decryption key
    n = N; // N is the modulus
end

always @(posedge clk) begin
    if(d_leftover[0] == 1) begin
        o <= (o * o_next) % n;
    end

    o_next <= (o_next*o_next) % n;
    d_leftover <= (d_leftover >> 1);
end

assign finish = (d_leftover == 0);
\end{lstlisting}
\end{document}
% \endinput